\newtheorem{prop}{Proposition}
\newtheorem{theorem}{Theorem}[section]
\newtheorem{definition}{Definition}[section]
\DeclareMathOperator{\logit}{logit}
\newcommand*\diff{\mathop{}\!\mathrm{d}}
\DeclareRobustCommand{\annuity}[1]{
	\hbox{\vtop{\vbox{%
				\hrule\kern 1pt\hbox{%
					$\scriptstyle #1$%
					\kern 1pt}}\kern1pt}%
		\vrule\kern1pt}}
\begin{document}

\title{A new paradigm of mortality modeling via individual vitality dynamics}

\author[1]{Xiaobai Zhu}
\author[2]{Kenneth Q. Zhou\thanks{Corresponding author. E-mail: \textit{kenneth.zhou@asu.edu}}}
\author[1]{Zijia Wang}

\affil[1]{The Chinese University of Hong Kong, Hong Kong, China}
\affil[2]{Arizona State University, Tempe, USA}

\date{\today}

\maketitle

\begin{abstract}
    The significance of mortality modeling extends across multiple research areas, ranging from life insurance valuation to optimal lifetime decision-making. Existing approaches, such as mortality laws and factor-based models, often fall short in capturing the complexity of individual mortality, hindering their ability to address specific research needs. To overcome these limitations, this paper introduces a novel approach to mortality modeling centered on the dynamics of individual vitality. A four-component framework is developed to account for initial conditions, natural aging processes, stochastic fluctuations, and accidental events over an individual’s lifetime. We demonstrate the framework’s analytical capabilities across various settings and explore its practical implications in solving life insurance problems and deriving optimal lifetime decisions. Our results show that the proposed framework not only encompasses existing mortality models but also provides individualized mortality outcomes and offers an intuitive explanation for survival biases.
\end{abstract}


\section{Introduction}

Mortality modeling serves as a cornerstone for numerous research domains and practical applications, ranging from life insurance valuation and longevity risk management to optimal lifetime decision-making and public health policies. Accurate mortality models are essential for life insurers when evaluating and managing the demographic risks associated with life insurance products. These models are used to calculate prices, determine reserves, and fulfill regulatory requirements, ultimately facilitating better planning and management of life insurance products.

Mortality modeling also plays an important role in optimal consumption and investment strategies. By estimating the survival time of individuals, mortality models can help develop optimal strategies for better financial planning and resource allocation. Additionally, the ability to predict longevity and mortality trends is crucial in pension fund management, public health planning, and demographic studies. In each of these areas, the reliability and interpretability of mortality modeling outcomes directly impact the quality of the decisions made.

Traditional mortality modeling approaches, such as mortality laws and factor-based models, have been instrumental in performing the aforementioned tasks. However, they often fall short in capturing the intricate dynamics of individual mortality, particularly when dealing with the heterogeneity in health conditions and accidental events over a lifetime. This gap in modeling capabilities highlights the need for a more individualized approach to better account for the variability and complexity of individual mortality.

This paper aims to address these limitations by introducing a novel mortality modeling framework centered on individual vitality dynamics. This framework conceptualizes mortality as the depletion of an individual’s vitality to zero and models the time of death as a first passage problem. By integrating multiple components into the underlying vitality process, our modeling approach delivers a more nuanced and intuitive understanding of mortality dynamics, which is important for improving actuarial practices and informing optimal decisions.

\subsection{Literature review}
Mortality laws are among the oldest and most fundamental approaches in mortality modeling. These laws, such as the Gompertz-Makeham law \citep{gompertz1825, makeham1860law}, aim to describe age-specific mortality rates using simple mathematical functions. Mortality laws are widely appreciated for their simplicity and ability to capture the general trend of increasing mortality with age, but they often fall short in capturing the complexities of real-life mortality data. More recent studies on this approach include, for example, \cite{gavrilov2019new}, \cite{li2021gompertz}, and \cite{di2023closed}.

In contrast to mortality laws, another category is developed from using stochastic processes to describe mortality intensities. The seminal work by \cite{milevsky2001mortality} applied a continuous-time Cox model to survival analysis. This modeling approach preserves analytical tractability and provides closed-form solutions to survival probabilities. Other studies include, for example, \cite{biffis2005affine}, \cite{bauer2010pricing}, \cite{blackburn2013consistent}, \cite{wong2017managing}, and \cite{zhou2022stochastic}.

Factor-based models, led by the seminal works of \cite{lee1992modeling} and \cite{cairns2006two}, have become prominent due to their ability to capture the trend and volatility in real mortality data through a combination of age-specific components and time-varying factors. Significant advancements have been made in this direction, including the works of \cite{haberman2012parametric}, \cite{kleinow2015common}, and \cite{hunt2023mortality}, with a comprehensive review provided by \cite{cairns2009quantitative}.

Similarly, spline-based models offer a non-parametric approach to mortality modeling by fitting smooth spline curves to capture the intricate patterns of mortality data. \cite{currie2004smoothing} demonstrated that the use of P-splines for smoothing mortality data provides a robust method for handling irregularities and fluctuations. Other studies include, for example, \cite{dodd2018smoothing}, \cite{camarda2019smooth}, \cite{dodd2021stochastic}, and \cite{zhu2023smooth}.

\subsection{Motivation}
Recently, a new approach of modeling mortality has been proposed by \cite{shimizu2021does} and \cite{shimizu2023survival}. Specifically, \cite{shimizu2021does} introduces the Survival Energy Model (SEM), where each individual is assumed to have a survival energy level that depletes stochastically over time through a time-inhomogeneous diffusion process. \cite{shimizu2023survival} expanded the SEM framework by using an inverse Gaussian process with an explicit mortality function. By focusing on the stochastic nature of an individual's survival energy, the SEM provides an intuitive way of understanding mortality dynamics.

Within the SEM framework, an individual's death time is essentially the first passage time that the survival energy process crosses zero. This modeling approach is closely connected to the field of first passage analysis, a well-established topic in applied probability with extensive applications in operations management \citep[see, e.g.,][]{yamazaki2017inventory,han2024production}, quantitative finance \citep[see, e.g.,][]{davydov2001pricing,cai2011option}, and actuarial science \citep[see, e.g.,][]{lin2000moments,asmussen2010ruin}. For instance, the depletion of an individual’s survival energy is similar to the dynamics of an insurer’s cash flow, where a negative jump in survival energy can be equivalently viewed as an insurer’s surplus process taking a spectrally negative form.

The concept of modeling individual survival energy, also known as \emph{vitality}, is not new beyond financial literature. It was pioneered by \cite{strehler1960general}, who defined vitality as ``the capacity of an individual organism to stay alive.'' \cite{anderson1992vitality} further contributed by using a Brownian motion to describe vitality dynamics. Along this direction, numerous studies have been conducted, including the works by \cite{anderson2000vitality}, \cite{aalen2004survival}, \cite{li2009vitality}, \cite{skiadas2010comparing}, \cite{sharrow2016quantifying}, and \cite{anderson2017insights}. A discussion on the application of the Strehler-Mildvan model to mortality modeling is available in \cite{finkelstein2012discussing}.

\subsection{Our contribution}
The objective of this paper is to explore a novel approach to mortality modeling through the lens of individual vitality dynamics. By conceptualizing death as the depletion of an individual’s vitality to zero, we aim to derive mortality modeling outcomes that are individualized for unique health conditions, lifestyle factors, and random life events. We also attempt to apply the vitality-based modeling approach to a variety of emerging research problems. Specifically, this paper makes three significant contributions.

The first contribution of this paper is the introduction of a vitality-based framework for mortality modeling. This framework assumes that each individual has an initial vitality that depletes to zero over time due to both natural causes and external factors. It comprises four components: initial vitality level, natural depletion trends, stochastic diffusion, and sudden jumps. Our modeling approach offers a comprehensive and flexible method for capturing various aspects of mortality dynamics.

The second contribution lies in the various specifications of the proposed vitality-based framework. We demonstrate that certain specifications can represent existing mortality models, while others can generate more diverse modeling outcomes. Under each specification, we derive survival probabilities and other relevant properties to illustrate how the vitality-based framework can complement and improve existing models in describing complex mortality dynamics at the individual level.

The third contribution of this paper relates to the practical applications of the vitality-based framework across several research domains. Specifically, we discuss how individual vitality can be incorporated into calculating life expectancies, pricing insurance products, and deriving optimal consumption decisions. Additionally, we derive disability and recovery probabilities, analyze different causes of death, and explain the under- and overestimation of survival probabilities under our vitality-based framework.

The structure of this paper is organized as follows. Section \ref{sec:vitality_model} details the proposed vitality-based mortality modeling framework. Section \ref{sec:specifications} covers the various specifications of the proposed framework. Section \ref{sec:applications} discusses how the vitality-based framework can be incorporated in various applications. Section \ref{sec:estimation} presents an illustration on the estimation of the proposed framework. Finally, Section \ref{sec:Conclusion} concludes the paper with a discussion on limitations and future research.

\section{The Proposed Framework} \label{sec:vitality_model}

In this section, we introduce the vitality-based mortality modeling framework along with its fundamental structure and components. Consider an individual who is age $x$ at time $0$ (the time of modeling), and let $V(t)$ be the vitality of this individual at time $t \geq 0$. The individual is considered deceased when the vitality level is exhausted (i.e., when $V(t) = 0$). Let $\tau$ be the first passage time of the vitality $V(t)$ crossing zero; that is,
\[
\tau = \inf\left\{t\geq 0: V(t)\le 0\right\}.
\]
Then, the $T$-year survival probability for the individual is $\Pr(\tau > T)$, for which we aim to derive an expression in this paper.

Our modeling framework describes the dynamics of $V(t)$ as
\begin{equation} \label{eq:Vt}
    V(t) = V(0) - Y(t) - W(t) - J(t), \quad t\ge 0,
\end{equation}
where $V(0)$, $Y(t)$, $W(t)$ and $J(t)$ are the four components with the following interpretations:
\begin{enumerate}
    \item[1.] $V(0)$ represents the vitality level at time $0$, also called the initial vitality.
    \item[2.] $Y(t)$ represents the trend at which the vitality level is naturally depleted.
    \item[3.] $W(t)$ represents the diffusion underlying the depleting rate of the vitality level.
    \item[4.] $J(t)$ represents the jump process that may affect the vitality level.
\end{enumerate}
In the following subsections, We describe these components in detail and discuss their conceptual underpinnings and intuitive implications.

\subsubsection*{Component 1 (C1): The initial $V(0)$}
The first component of our vitality-based modeling framework is the initial vitality of the individual being modeled, denoted as $V(0)$. This component sets the baseline from which all future vitality changes are measured. If the individual's age $x$ is zero, then $V(0)$ represents the vitality level of a newborn. Otherwise, $V(0)$ can be interpreted as the remaining vitality level of the individual $x$ years after birth.

We can model $V(0)$ either as a fixed value or a random variable. A fixed value of $V(0)$ assumes that the level of initial vitality is known and also the same for all of the individuals being modeled. This setting may be reasonable if the group of individuals is homogeneous, for example, a portfolio of selected insurance policyholders. We may also assume that all newborns from a certain birth year will have the exact same initial vitality.

In a more realistic setting, the value of $V(0)$ should be treated as random to represent the inherent variability in the initial vitality level of individuals. This randomness reflects the diversity in individuals within a group, such as genetics, lifestyles, and other factors that affect an individual's initial health condition. We would thus model $V(0)$ as a random variable with
\[
V(0) \sim F_0,
\]
where $F_0$ is a distribution that could be, for example, exponential or Pareto.

Consider a group of individuals all aged $x>0$. Their remaining vitality after $x$ years of living will differ due to their personal lifestyle, living environment, and other individual-specific factors. Even for newborns with age $x=0$, the initial vitality should vary within a cohort year to reflect different birth traits, such as genetic disorders and birth defects. The random setting of $V(0)$ is useful for constructing a model that can reflect individual differences, and also for building a linkage between existing mortality models and vitality-based models.

\subsubsection*{Component 2 (C2): The trend $Y(t)$}
The second component describes the decline in an individual's vitality due to the natural aging process. We denote $Y(t)$ to represent the depletion trend in vitality that occurs as the individual ages beyond time $0$. First, $Y(t)$ can be modeled as a deterministic function of time. In particular, we can assume that
\[
\diff Y(t) = \mu(t) \, \diff t, \quad t\ge 0,
\]
where $\mu(t)$ is the vitality depletion rate at time $t$ when the individual is aged $x+t$. The exact choice of $\mu(t)$ should be informed by both theoretical considerations and empirical studies.

The function $\mu(t)$ can be specified in various forms, depending on the specific characteristics and assumptions made. Common choices include constant, exponential, or more complicated functional forms that better fit empirical observations. For instance, an exponential decline might be used to represent a situation where the rate of vitality loss accelerates with age. Another example would be using the Gompertz law to specify $\mu(t)$; that is, $\mu(t) = bc^{x+t}$ where $b$ and $c$ are the conventional Gompertz parameters.

Alternatively, we may assume that the depletion rate is uniform over time; that is, $\mu(t) = \delta$ with $\delta$ being a constant. Under this setting, the rate at which vitality depletes is independent of time $t$, which may be reasonable when the duration of $\mu(t) = \delta$ is short. Moreover, we may further assume a piece-wise function of $\mu(t)$, where the depletion rate is flat over each integer age of the individual; that is, $\mu_x(t) = \delta_{x+\lfloor t \rfloor}$. This assumption is equivalent to the constant force of mortality assumption frequently used in mortality modeling.

\subsubsection*{Component 3 (C3): The diffusion $W(t)$}
The third component introduces a certain degree of uncertainty to the vitality dynamics of the individual. From the modeling perspective, given that the trend component $Y(t)$ has a deterministic nature, it is reasonable to consider a diffusion component, denoted as $W(t)$, to reflect the fact that the vitality level of an individual can fluctuate randomly around its depletion trend. This stochastic variation will account for any vitality fluctuations that an individual may experience over time. 

One simple way to model $W(t)$ is to use a Brownian motion; that is,
\[
\diff W(t) = \sigma(t) \diff B(t), \quad t\ge 0,
\]
where $B(t)$ is a standard Brownian motion and $\sigma(t)$ is a diffusion parameter at time $t$. A more advanced stochastic process can be considered to replace $B(t)$, such as a fractional Brownian motion for capturing the long-term dependence in the individual's health status over time. Additionally, a piece-wise function of $W(t)$ can also be considered to reflect that different life periods may face different levels of vitality fluctuations.

With the diffusion component, we can simulate random trajectories of vitality over time to provide insights into the variability and uncertainty inherent in the aging process. These simulated trajectories are particularly helpful in understanding the probabilistic nature of vitality depletion and are crucial for applications such as deriving personalized optimal consumption decisions.
Moreover, we can use simulated trajectories to analyze how different individuals' vitality diverge from their deterministic trend.

\subsubsection*{Component 4 (C4): The jump $J(t)$}
Another natural phenomenon that could affect an individual's health status is the occurrence of random events, such as an automobile incident or a medical surgery. These events can cause sudden and significant changes to an individual's vitality level, but cannot be captured by the depletion trend or the stochastic diffusion. To address this issue, we let the last component of our vitality-based modeling framework to be a jump component, denoted as $J(t)$. 

To incorporate the impact of accidental events into the vitality dynamics, we apply a jump process to $J(t)$; that is,
\[
J(t) = \sum_{i=1}^{N(t)} Z_i, \quad t\ge 0,
\]
where $N(t)$ counts the number of accidental events that have occurred up to time $t$, and $Z_i$ represents the size of the vitality jump caused by the $i$-th accident. We can interpret $J(t)$ as the total amount of vitality changes due to accidental events until time $t$ with random occurrence times and severity.

A straightforward example for modeling $J(t)$ would be a compound Poisson process, where $N(t)$ follows a Poisson process and $Z_i$ follows a certain severity distribution. This modeling choice is suitable because a Poisson process naturally models the occurrence of random events, while the size of each jump can follow various distributions depending on the nature of the accidental events. For instance, $Z_i$ can be modeled by a normal distribution to represent symmetric vitality changes or an exponential distribution for severe asymmetric impacts. A positive value of $Z_i$ will lead to a vitality loss, while a negative value of $Z_i$ will indicate a positive effect on vitality, such as a successful medical intervention. In a simplified setting, we may assume that $Z_i = \infty$ to represent the event of a fatal accident, where the individual's vitality drops below zero instantaneously. 

\section{Model Specifications} \label{sec:specifications}

In this section, we present several vitality models to illustrate how the proposed framework can be tailored to achieve different mortality modeling outcomes and also to enhance existing mortality models. Figure \ref{fig:vitality_connections} outlines how the remainder of this section is structured and also highlights the connection between our vitality models and some existing models.

\begin{figure}[th!]
\begin{center}
        \begin{tikzpicture}[node distance=2cm, 
section311/.style={rectangle,draw,fill=blue!10,rounded corners=.8ex},
    section312/.style={rectangle,draw,fill=red!10,rounded corners=.8ex},
    section314/.style={rectangle,draw,fill=green!20,rounded corners=.8ex},
    section321/.style={rectangle,draw,fill=brown!30,rounded corners=.8ex},
   section322/.style={rectangle,draw,fill=yellow!20,rounded corners=.8ex},
   section323/.style={rectangle,draw,fill=orange!30,rounded corners=.8ex},
   ]
   \footnotesize
\usetikzlibrary{shapes.geometric, arrows}

\tikzstyle{startstop} = [rectangle, rounded corners, minimum width=3cm, minimum height=1cm,text centered, draw=black, fill=red!30]
\tikzstyle{process} = [rectangle, minimum width=3cm, minimum height=1cm, rounded corners, text centered, draw=black]
\tikzstyle{decision} = [diamond, minimum width=3cm, minimum height=1cm, text centered, draw=black, fill=green!30]
\tikzstyle{arrow} = [thick,->,>=stealth]

\node (vitality) [startstop, rounded corners, fill=white]   {{Vitality Mortality Model}};
\node (static) [process,  fill=white, below of=vitality, xshift=-3.5cm, node distance = 2cm] {Static Parameters};
\node (dynamic) [process, rounded corners, fill=white, right of=static, node distance = 7.5cm] {Dynamic Parameters};

\node (gompertz) [process, rounded corners, section312, below of = static, minimum width=1.5cm, text width=1.5cm] {Plateau Models};
\node (laws) [process, rounded corners, section311, left of = gompertz ,minimum width=1.5cm, text width=1.5cm] {Mortality Laws};
\node (plateau) [process, rounded corners,  section314, right of=gompertz, minimum width=1.5cm, text width=2.5cm, xshift=0.5cm] {Jump-Diffusion Models};

\node (cbd) [process, section321, below of=dynamic, xshift=-2cm, text width = 1.5cm, minimum width=1.5cm] {M5 Model};
\node (cohort) [process, section322, right of=cbd, node distance  = 2cm, text width=1.5cm, minimum width=1.5cm]{M6 Model};
\node (cohortage) [process, section323, right of=cohort, node distance  = 2cm, text width=1.5cm, minimum width=1.5cm]{M8 Model};

\draw [arrow] (vitality) -- (static);
\draw [arrow] (vitality) -- (dynamic);
\draw [arrow] (static) -- (laws);
\draw [arrow] (static) -- (gompertz);
\draw [arrow] (static) -- (plateau);

\draw [arrow] (dynamic) -- (cohort);
\draw [arrow] (dynamic) -- (cbd);
\draw [arrow] (dynamic) -- (cohortage);

 \node[draw, rectangle,draw,section311, minimum width=0.7cm,label=right:\footnotesize{Section 3.1.1}] at (-6,-0.4) {};
  \node[draw, rectangle,draw,section312, minimum width=0.7cm,label=right:\footnotesize{Section 3.1.2}] at (-6,-0.7) {};
    \node[draw, rectangle,draw,section314, minimum width=0.7cm,label=right:\footnotesize{Section 3.1.3}] at (-6,-1) {};
    
  \node[draw, rectangle,draw,section321, minimum width=0.7cm,label=right:\footnotesize{Section 3.2.1}] at (5,-0.4) {};
  \node[draw, rectangle,draw,section322, minimum width=0.7cm,label=right:\footnotesize{Section 3.2.2}] at (5,-0.7) {};
   \node[draw, rectangle,draw,section323, minimum width=0.7cm,label=right:\footnotesize{Section 3.2.3}] at (5,-1) {};

\end{tikzpicture}
\end{center}
\caption{Connections between vitality mortality models and existing mortality models.}
\label{fig:vitality_connections}
\end{figure}

\subsection{Models with static parameters}\label{sec:vitality_static}

We begin from vitality models that are built by static parameters. The first goal is to establish a connection between traditional mortality laws and vitality models. We then extend those vitality models to include diffusion and jump components and derive analytical solutions for the resulting survival probabilities.

\subsubsection{Mortality laws}\label{sec:vitality_static_traditional}
Let us consider the following specification of the proposed framework:
\begin{itemize}
    \item[C1] The initial vitality $V(0)$ follows an exponential distribution with unit scale; that is, $V(0) \sim \text{Exp}(1)$.
    \item[C2] The trend component $Y(t)$ is governed by a functional form of $\mu_x(t) > 0$, specified by a mortality law. For instance, if the mortality law is Gompertz, then we have $\mu_x(t) = b c^{x+t}$ and 
    \[
    \diff Y(t) = b c^{x+t}  \diff t, \quad t\ge 0,
    \]
    where $b $ and $c$ are the two Gompertz parameters.
    \item[C3] No diffusion component is specified (i.e., $W(t) = 0$ for all $t\ge 0$).
    \item[C4] No jump component is specified (i.e., $J(t) = 0$ for all $t\ge 0$).
\end{itemize}
Conditioning on the initial vitality $V(0)$, $V(t)$ is a non-increasing function of $t$ since $\mu_x(t) >0$. We thus have
\begin{align*}
    \operatorname{Pr}(\tau>T) = \operatorname{Pr}(V(T) > 0 ) = \Pr \left( V(0) >  \int_0^T\mu_x(t)\diff t\right).
\end{align*}
Given that $V(0) \sim \operatorname{Exp}(1)$, the $T$-year survival probability can be expressed as
\begin{equation}\label{eq:SurvProb_MortLaws}
   \Pr(\tau > T) = \exp\left(-\int_0^T \mu_x(s)\diff s \right).
\end{equation}

We remark that the survival probability provided by equation \eqref{eq:SurvProb_MortLaws} is the same as the survival probability implied by the mortality law $\mu_x(s)$ for $s \in (0,T)$. The vitality-based modeling approach is able to reformulate traditional mortality laws from describing the force of mortality $\mu_x$ to modeling the depletion speed of vitality with a random initial vitality level. A similar approach has been considered in biology research. For instance, \cite{li2009vitality} considered a normal distribution for $V(0)$, whereas \cite{aalen2004survival} considered an exponential distribution for $V(0)$ to draw a connection between hazard rate models and vitality models. 

\subsubsection{Plateau death models}\label{sec:vitality_plateau}

One drawback of traditional mortality laws is that the force of mortality at extreme old ages would increase indefinitely, which contradicts the plateau effect observed in real morality data. Our proposed framework is capable of reproducing the plateau effect by specifying the first two components as
\begin{itemize}
    \item[C1] The initial vitality $V(0)$ follows a Type-II Pareto distribution with shape parameter $\alpha$ and scale parameter $\phi$. The survival function of this Pareto distribution is
    \[
    S(v) = \left(1+ \frac{v}{\phi} \right)^{-\alpha}.
    \]
    \item[C2] The trend component $Y(t)$ is governed by a mortality law $\mu_x(t)$: 
    $ \diff Y(t) = \mu_x(t)\diff t $.

\end{itemize}
The resulting $T$-year survival probability of the above modeling settings is
\begin{align}\label{eq:SurvProb_Plateau}
    \Pr (\tau>T)  = \Pr \left(V(0) > \int_0^T \mu_x(t)\diff t \right) =\left( 1 + \frac{ \int_0^T \mu_x(t)\diff t  }{\phi} \right)^{-\alpha}.
\end{align}

In particular, if $\mu_x(t)$ follows the Gompertz law, then the resulting survival probability is known as the Gamma-Gompertz model \citep[see, e.g.,][]{vaupel1979impact, missov2013gamma}. In the Gamma-Gompertz model, a frailty term $Z$ is specified such that $\mu_x(t|Z) = Z \cdot \mu_x(t)$ and $Z$ follows a Gamma distribution with shape parameter $\alpha$ and rate parameter $\phi$. The resulting $T$-year survival probability of the Gamma-Gompertz model will coincide with equation \eqref{eq:SurvProb_Plateau}, which is derived from our vitality mortality model (see Appendix \ref{app:ProofsPlateau}). \cite{missov2013gamma} has demonstrated that the Gamma-Gompertz model reflects the plateau effect such that the implied force of mortality converges to a constant as age increases. 

An alternative specification for capturing the plateau effect is
\begin{itemize}
    \item[C1] The initial vitality follows an exponential distribution with unit scale: 
    $V(0) \sim \text{Exp}(1)$.
    \item[C2] The trend component is given by 
    $\diff Y(t) = Z \mu_x(t) \diff t$,
    where $\mu_x(t)$ is specified by a mortality law and $Z$ follows a Gamma distribution with shape parameter $\alpha$ and rate parameter $\phi$.
\end{itemize}
The resulting $T$-year survival probability of the above specification is the same as equation \eqref{eq:SurvProb_Plateau} (see Appendix \ref{app:ProofsPlateau} for more details). These two specifications have distinct interpretations. The first one describes the initial vitality level of the underlying population as being more dispersed and the reason for the death plateau is due to a larger proportion of the population exhibiting a large initial vitality level. The second one implies that the death plateau is indeed due to the heterogeneity in the depletion rate. While it is possible to find other formulations that yield the same survival probabilities, our study does not aim to be exhaustive in covering all possible formulations. We provide an additional formulation at the end of Appendix \ref{app:ProofsPlateau}.

\subsubsection{A jump-diffusion vitality model} \label{sec:vitality_jump}

The traditional mortality laws, reformulated under our vitality-based framework, suffer two clear drawbacks. First, although the trend component $Y(t)$ is able to capture the natural decay of an individual's vitality level, any accidental death that may occur randomly during the individual's lifetime is not included. Second, the natural vitality decay specified by $\mu_x(t)$ is deterministic, which means the randomness that existed in the individual's health status over time is ignored.

To introduce randomness, \cite{shimizu2021does} and \cite{shimizu2023survival} have considered applying a diffusion process and an inverse-Gaussian process, respectively, to govern an individual's vitality. However, these studies assume a deterministic initial vitality and an arbitrarily chosen trend component, which consequently fails to connect the vitality-based approach with traditional mortality laws. Moreover, one disadvantage of using inverse Gaussian processes is that their implied infinite number of small jumps may not be suitable for characterizing the decrement of vitality levels due to accidents.

Under the proposed framework, we now consider adding a Brownian motion in the diffusion component to reflect random fluctuations in vitality dynamics, and a compound Poisson process in the jump component to reflect the possibility of accidental events. In particular, a jump-diffusion vitality mortality model is specified as follows:
\begin{itemize}
    \item[C1] The initial vitality $V(0)$ follows a distribution $F_0$ with positive support: 
    $V(0) \sim F_0$.
    \item[C2] The trend component $Y(t)$ is governed by
    $\diff Y(t) = \mu_x(t) \diff t$,
    where $\mu_x(t)$ is specified by a mortality law.
    \item[C3] The diffusion component $W(t)$ is specified as
    $W(t) = \sigma B(t)$,
    where $B(t)$ is a Brownian motion and $\sigma$ is the diffusion parameter.
    \item[C4] The jump component $J(t)$ is specified as
    $J(t) =  \sum_{i=1}^{N(t)}Z_i$,
    where $N(t)$ is a Poisson process with intensity rate $\lambda(t)$, and $\{Z_i\}_{i\ge 1}$ is a sequence of independent and identically distributed (i.i.d.) positive random variables with distribution function $F_Z(\cdot)$.
\end{itemize}
The dynamics of $V(t)$ is consequently governed by
\begin{equation} \label{eq:Vt_JumpDiffusion}
    V(t) = V(0) - \int_{0}^{t} \mu_x(s) \diff s - \sigma B(t) - \sum_{i=1}^{N(t)} Z_i, \quad t\ge 0.
\end{equation} 
We assume that $V(0)$, $W(t)$, and $J(t)$ are independent of each other. The resulting $T$-year survival probability can be expressed as
\begin{equation} \label{eq:SurvProb_JumpDiffusion}
     \Pr(\tau > T) = \int_{0}^{\infty} \Pr\left( v - \sigma B(t) - \sum_{i=1}^{N(t)}Z_i \geq \int_{0}^{t} \mu_x(s) \diff s, \; \forall t\leq T \right) \diff F_0(v).
\end{equation}

We remark that deriving analytical expressions for the survival probability is generally challenging in this case, unless specific forms of $Y(t)$ are assumed such as linear functions. To elaborate further, the main difficulty lies in deriving the law of the first passage time of $B(t)$ over a moving boundary, which is a non-trivial problem in applied probability (see, e.g., \cite{salminen1988first}). Furthermore, deriving analytical results about the death time using renewal arguments may also not be applicable if the trend does not preserve the renewal structure of the vitality dynamics, for instance, when $Y(t)$ is modeled under the Gompertz law. In such cases, one may resort to numerical methods to approximate the survival probability of a jump-diffusion vitality model. In this paper, we employed the Monte Carlo method as suggested in \cite{jin2017first} to numerically evaluate the survival probability as indicated in equation \eqref{eq:SurvProb_JumpDiffusion}, and provided the details in Appendix \ref{app:MonteCarlo}.

In Section \ref{sec:vitality_plateau}, we illustrated two model specifications that equally describe the death plateau. It is not surprising that more than one specification of the vitality-based modeling framework can provide outcomes that align with mortality laws. We now formulate an alternative specification that also connects with mortality laws. The following specification is assumed:
\begin{itemize}
    \item[C1] The initial vitality $V(0)$ follows a Gompertz distribution with shape parameter $\eta$ and scale parameter being one. The distribution function of this Gompertz distribution is
    $
    F(v) = 1 - e^{-\eta \left(e^{v} -1\right)}.
    $
    \item[C2] The trend component $Y(t)$ has a linear decay at the rate of $\delta$; that is,
    $\diff Y(t) = \delta \diff t$.
\end{itemize}
We can show that the resulting $T$-year survival probability is
\begin{equation}\label{eq:SurvProb_AlterGompertz}
    \Pr(\tau > T) = \exp\left( -\eta \left(e^{T\delta} -1\right)  \right),
\end{equation}
where, to obtain the Gompertz law's expression of survival probability, the Gompertz parameters $b$ and $c$ need to be parameterised by $\eta = \frac{b  c^x}{\ln(c)}$ and $\delta = \ln(c)$.

To introduce randomness in the evolution of an individual's vitality over time, we can add the diffusion component and the jump component that were previously assumed in this subsection to formulate the jump-diffusion vitality mortality model again. The dynamics of $V(t)$ will be governed by
\begin{equation}\label{eq:Vt_AlterStochastic}
    V(t) = V(0) - \delta t - \sigma B(t) - \sum_{i=1}^{N(t)}Z_i, \quad t\ge 0.
\end{equation}
One special case worth noting is when the jump represents a fatal event and the survival probability can be analytically expressed in the following proposition.

\begin{prop} \label{eq:SurvProb_AlterStochastic}
Given the vitality model \eqref{eq:Vt_AlterStochastic} with $Z_i = \infty$, the survival probability can be analytically solved as
    \begin{align}
  \notag  \Pr(\tau > T) &= {e}^{-\int_0^T \lambda(t) \diff t} \cdot \Pr \left(V(0) - \delta t -\sigma B(t) > 0 \text{ for all }t\in [0,T] \right)\\ \nonumber \\
    &={e}^{-\int_0^T \lambda(t) \diff t} \cdot  \int_0^{\infty} \Phi\left(\frac{v - \delta T}{\sigma \sqrt{T}} \right) - \exp\left(\frac{2 \delta v}{\sigma^2} \right)\Phi\left( \frac{-v-\delta T}{\sigma\sqrt{T}}  \right) \diff F_0(v),
\end{align}
where $\Phi(\cdot)$ is the distribution function of a standard Gaussian distribution.
\end{prop}

Proposition \ref{eq:SurvProb_AlterStochastic} directly follows from the law of the first passage time of a Brownian motion with linear drift (see, e.g., \cite{siegmund1986boundary}), and it is thus omitted here. In general, conditioning on $V(0)$, $V(t)$ is a spectrally negative L\'evy process (SNLP), and the Laplace transform of $\tau$ is known explicitly. This allows for the use of either analytical or numerical methods of Laplace inversion to obtain the survival probability. To illustrate this, in Appendix \ref{alternative mortality}, we provide related results for the vitality mortality model described by equation~\eqref{eq:Vt_AlterStochastic} when the jump distribution is a mixture of exponential distributions.

We remark that it might be convenient to work with the exponential transform of a vitality mortality model, such that we define the exponentially-transformed $\tilde{V}(t)$ as 
\begin{align*}
    \tilde{V}(t) = \exp\left(V(t)\right) = \exp\left(V(0) - Y(t) - W(t) - J(t)\right), \quad t\ge 0,
\end{align*}
and the individual dies if $\tilde{V}(t)$ cross the threshold $D = 1$, $\tau = \inf \left\{t\geq 0 : \tilde{V}(t)\leq 1 \right\}$.
The threshold $D$ can be arbitrarily chosen.
If $D\neq 1$, we may scale the exponential transform $\exp(-\ln(D) \cdot V(t))$ to recover a scaled threshold of one. The exponential-transformed vitality has been applied in some studies. For example, \cite{chen2022optimal} modeled $\tilde{V}(t)$ as a Geometric Brownian Motion by setting $\tilde{V}(0)$ as a constant, $\diff Y(t) = \left(\mu -\frac{\sigma^2}{2}\right) \diff t$, $\diff W(t) = \sigma \diff B(t)$ and $J(t) = 0$. For any specification of initial vitality $V(0)$, it is not difficult to obtain the corresponding distribution function for exponentially transformed initial vitality $\tilde{V}(0)$. For instance, $V(0) \sim \text{Exp}(1)$ corresponds to $\tilde{V}(0) \sim \text{Pareto}(\text{shape} = 1, \text{location} = \text{scale} = 1)$ and $V(0) \sim \text{Gompertz}$ with shape parameter $\eta$ and scale parameter of 1 corresponds to $\tilde{V}(t) \sim \text{log-Gompertz}$, also known as the inverse Weibull distribution, with the same shape and scale parameters. 

\subsection{Models with dynamic parameters}

In this subsection, we build vitality models that use on dynamic parameters and consequently have a connection with stochastic mortality models. We begin by examining a dynamic extension of the Gompertz law, and then expand our investigation to models with age and cohort effects.

\subsubsection{A vitality model with period effect} \label{sec:vitality_dynamic_simple}

We start from a straightforward example, where the depletion rate of vitality is dynamically specified with time-varying parameters. In particular, we first construct a dynamic Gompertz law with the following specifications:
\begin{itemize}
    \item[C1] The initial vitality $V(0)$ follows an exponential distribution with unit scale: $V(0) \sim \text{Exp}(1)$.
    \item[C2] The trend component $Y(t)$ is specified by a dynamic version of the Gompertz law with time-varying parameters $b(t)$ and $c(t)$:
    \[
    \diff Y(t) = \mu_x(t) \diff t =  b(t) c(t)^{x+t} \diff t, \quad t\ge 0.
    \]
    The log transforms of $c(t)$ and $b(t)$ are governed, respectively, by
    \begin{align*}
        \diff \ln c(t) &= \mu_c \diff t + \sigma_c \diff B_c(t), \quad c(0) = c_0 > 0,\\
        \diff \ln b(t) &= \mu_b \diff t + \sigma_b \left( \sqrt{1-\rho^2}\diff B_b(t) + \rho \diff B_c(t) \right), \quad b(0) = b_0 > 0,
    \end{align*}
    where $B_c(t)$ and $B_b(t)$ are two independent Brownian motions.
\end{itemize}

For the Gompertz law, a dynamic version of $b(t)$ can be interpreted as the general level of mortality for the individual at time $t$ (i.e., the intercept of mortality curve), and $c(t)$ can be interpreted as the rate of changes in mortality when the individual ages at time $t$ (i.e., the slope of mortality curve). Under the vitality mortality model, we may alternatively interpret $b(t)$ as the general rate of vitality depletion at time $t$ and $c(t)$ as the change in depletion rate when the individual ages at time $t$.

The solution of $Y(t)$ is an integral of correlated log-normally distributed random variables
\begin{align}\label{eq:vitality_integral_GBM}  
    Y(t) &= Y(0) + \int_0^t \exp\left( \mu_Y(s) + \sigma_Y(s) B(s)  \right)\diff s ,\quad t\ge 0,
\end{align}
where $B(s)$ is a standard Brownian Motion, and
\begin{align*}
    \mu_Y(t) &= \mu_b\cdot t + (x+t)\cdot \mu_c \cdot  t  + \ln b_0 + (x+t) \cdot \ln c_0,\\
    \sigma_Y(t) &= \sigma_b^2 + (x+t)^2 \cdot \sigma_c^2  + 2 (x+t) \cdot \rho \cdot  \sigma_b \cdot  \sigma_c,  
\end{align*}
Since $Y(t)$ is a non-decreasing process, the $T$-year survival probability can be expressed as $\Pr(\tau >T) = \mathbb{E}\left[ \exp\left( - Y(T) \right) \right]$.

The dynamic Gompertz law presented above is closely related to the Cairns-Blake-Dowd (CBD) model, which describes mortality over time and age for a population of individuals. A continuous version of the CBD model (applied to $\ln \mu_{x,t}$ instead of $\logit q_{x,t}$) can be constructed as follows:
\begin{align*}
    \ln \mu(x,t)   &= \kappa_{1}(t) + \kappa_{2}(t) \cdot (x-\bar{x}),\quad t\ge 0.
\end{align*}
The $T$-year survival probability of an individual aged $x$ at time $0$ under this continuous version of the CBD model is $\mathbb{E}\left[\exp\left( - \int_0^T \mu(x+t,t)\diff t \right) \right]$,
where
\begin{align*}
    \mu(x+t,t) = \exp\left(\kappa_{1}(t) - \kappa_2(t)\cdot \bar{x}\right)\cdot \left[\exp\left(\kappa_2(t) \right)\right]^{x+t }.
\end{align*}
Clearly, if $\kappa_1(t)$ and $\kappa_2(t)$ are modeled as a two-dimensional random walk with drifts \citep[see, e.g.,][]{li2021constructing}, then the parameters in the process of $\ln b(t)$ and $\ln c(t)$ can be chosen to match with the continuous CBD model as
\begin{align*}
    b(t) = \exp\left(\kappa_{1}(t) - \kappa_2(t)\cdot \bar{x}\right), \quad c(t) = \exp\left(\kappa_2(t) \right).
\end{align*}
The resulting $T$-year survival probabilities would be the same between our vitality mortality model and the continuous version of the CBD model.

Recall from equation \eqref{eq:vitality_integral_GBM} that $Y(t)$ can be expressed as an integral of log-normal random variables. Although the distribution of $Y(t)$ cannot be explicitly determined, accurate numerical approximations can be made to evaluate the survival probabilities. For example, \cite{begin2023new} demonstrated that the integral of log-normal random variables can be well-approximated by a single log-normal random variable (see also, \cite{lo2013wkb}). The approximated moment generating function from \cite{asmussen2016laplace} is used to value the expectation $\mathbb{E}[\exp(-Y(t))]$.

Our vitality mortality model can be constructed to have the same survival probability as other stochastic mortality models, whenever the force of mortality is guaranteed to be non-negative. For instance, consider the Lee-Carter model: 
\[
\ln \mu(x,t) = \alpha(x) + \beta(x)\kappa(t),\quad t\ge 0.
\]
If $\alpha(x)$ and $\beta(x)$ are specified as continuous deterministic functions and $\kappa(t)$ is governed by a continuous random walk with drift, then a vitality mortality model can be established to match the likelihood of the Lee-Carter model. Another example is the Cox-Ingersoll-Ross (CIR) model for modeling the log force of mortality, as discussed in \cite{huang2022modelling} and \cite{ungolo2023estimation}, where the likelihood aligns with our vitality model if we maintain the same specification for the depletion rate $\mu_{x}(t)$.

While the likelihood of our vitality models can align with well-known stochastic mortality models, we believe that the proposed modeling approach offers significant potential and enhanced capabilities. An immediate observation is that the Gompertz law fails to differentiate mortality events between natural causes and external causes. When modeling the death rate for all causes, it is common to model $\kappa_1(t)$ (or $\ln b(t)$ in the vitality model) as a random walk with a negative drift, implying a stable improvement in the general mortality. However, if we decompose mortality into different causes of death, distinct historical trends will be observed. 

Figure \ref{fig:energy_accidents} displays the accidental death rates from 1999 to 2016 across different age groups, obtained from the U.S. Centers for Disease Control and Prevention. Unlike the all-cause death rate, which typically shows a downward trend, the accidental death rate exhibits an upward trend for all of the age groups considered. This finding has significant implications for accidental death insurance, in which distinguishing the effects of accidental and non-accidental deaths is crucial. Under the vitality modeling framework, we are able to address this issue by specifying the jump component (C4) with a time-varying intensity that increases over time.

\begin{figure}[ht!]
    \centering
    \includegraphics[width=0.9\linewidth]{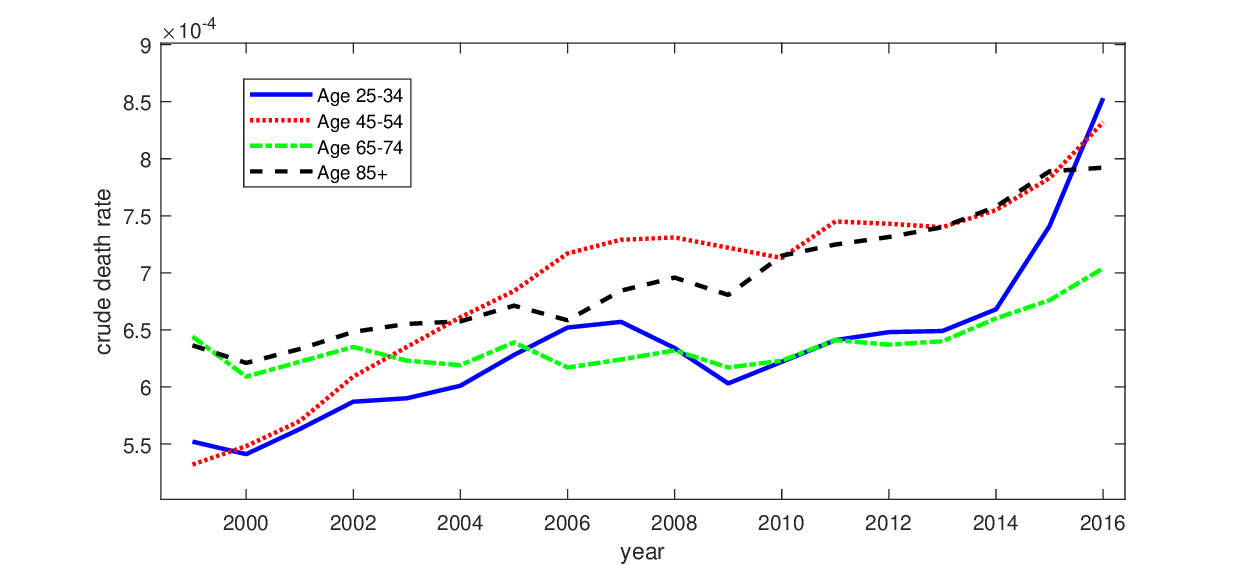}
    \caption{The observed accident death rates from 1999 to 2016 across different age groups.}
    \label{fig:energy_accidents}
\end{figure}

Considering the model setting introduced at the beginning of this section, where $Y(t)$ can be represented by an integral of log-normally distributed random variables, the $T$-year survival probability is modified by incorporating $J(t)$ which follows a compound Poisson process as defined previously:
    \begin{align*}
        \Pr(\tau>T) &= \Pr(J(T) < V(0) - Y(T))\\
        &= \int_0^{\infty} \left( \int_0^{v} \Pr( J(T)\leq v-u)  \diff F_Y(u)  \right) \diff F_0(v)
    \end{align*}
where $F_Y(\cdot)$ is the distribution functions for $Y(T)$. Since $F_0(v)$ and $\Pr(J(T)\leq v-u)$ can be explicitly expressed and $F_Y(\cdot)$ can be approximated as the distribution function of a single log-normal random variable, the integral for obtaining $\Pr(\tau>T)$ can be approximated numerically. 

\subsubsection{A vitality model with period-cohort effect} \label{sec:vitality_dynamic_cohort}

In a factor-based mortality model, the cohort effect is captured by an additive term, often denoted by $\gamma$, such as in the Renshaw-Haberman model (M2), the Age-Period-Cohort model (M3), the CBD model with cohort effect (M6), the CBD model with quadratic and cohort effects (M7), and the CBD model with age-dependent cohort effect (M8), as outlined in \cite{cairns2009quantitative}. In a vitality mortality model, the cohort effect can be naturally embedded in the initial vitality to reflect changes in the average healthiness of each generation. 

We illustrate the idea of a vitality model with cohort effect based on the dynamic Gompertz law with cohort effect, which corresponds to the CBD model with cohort effect (M6). Denote $V_y(t)$ as the time-$t$ vitality level of an individual born in year $y$ with an age of $x$ at time $0$. The following specification is assumed:
\begin{itemize}
    \item[C1] The initial vitality $V_{y}(0)$ for cohort year $y$ follows an exponential distribution with rate parameter $\Gamma(y)$, where
    \begin{equation} \label{eq:CohortProcess}
    \frac{\diff \Gamma(y)}{\Gamma(y)} = \mu_{\gamma} \diff y + \sigma_{\gamma} \diff B_{\gamma}(y), \quad \Gamma(0) = 1,
    \end{equation} 
    where $B_{\gamma}(y)$ is a standard Brownian motion.
    \item[C2] The trend component $Y(t)$ is specified as in the dynamic Gompertz model: $\diff Y(t) = b(t) c(t)^{x+t} \diff t$, $t\ge 0$.
    The log transforms of $c(t)$ and $b(t)$ are governed by
    \begin{align*}
        \diff \ln c(t) &= \mu_c \diff t + \sigma_c \diff B_c(t), \qquad c(y) = c_0(y),\\
        \diff \ln b(t) &= \mu_b \diff t + \sigma_b \left( \sqrt{1-\rho^2}\diff B_b(t) + \rho \diff B_c(t) \right), \quad b(y) = b_0(y),
    \end{align*}
    where $B_c(t)$ and $B_b(t)$ are two independent standard Brownian motions, that are also independent to $B_{\gamma}(y)$.
\end{itemize}
The time-$t$ vitality level for an individual born in year $y$ and aged $x$ at time $0$ is therefore $V_y(t) = V_y(0) - Y(t)$. The $T-$year survival probability can be expressed as
\begin{align*}
    \Pr(\tau_y > T) &= \mathbb{E}\left[ \exp\left( -\Gamma(y) Y(T) \right) \right],
\end{align*}
where $\tau_y$ is the first passage time for $V_y(t)$ crossing zero, and one can refer to Appendix \ref{dynamic vitality} for more details.

The cohort effect is reflected by the rate parameter $\Gamma(y)$, which can be intuitively interpreted as the average initial vitality level at the time of modeling for individuals born in birth year $y$. The initial condition $\Gamma(0) = 1$ ensures identifiability between $\Gamma(y)$ and $b(t)$. Note that the dynamics of $\Gamma(y)$ can be chosen to satisfy certain properties, for example, setting $\mu_{\gamma} = 0$ ensures that the future cohorts are expected to be the same as the current one, or modeling $\Gamma(y)$ as a CIR process to incorporate a mean-reverting cohort effect. 

We now establish a link between this vitality mortality model and the continuous version of the M6 model. We fix age $x$ and construct the M6 model with time $t$ and cohort $y$ as
\begin{align*}
    \ln \mu_{y}(t) &= \kappa_1(t) + \kappa_2(t) \cdot (x+t - \bar{x}) + \gamma(y) .
\end{align*}
We can again re-parameterize $b(t)$, $c(t)$, and $\Gamma(y)$ in our vitality model to match the M6 model by having
\begin{align*}
    b(t) = \exp\left(\kappa_{1}(t) - \kappa_2(t) \bar{x}\right), \quad c(t) = \exp\left(\kappa_2(t) \right), \quad \Gamma(y) = \exp(\gamma(y)).
\end{align*}
The $T$-year survival probability under the above re-parameterization of the vitality model will coincide with the M6 model.

\subsubsection{A vitality model with age-adjusted period-cohort effect}\label{sec:vitliaty_dynamic_apc}

The vitality models presented in Sections \ref{sec:vitality_dynamic_simple} and \ref{sec:vitality_dynamic_cohort} begin the modeling of vitality with a fixed initial age $x$. It would be interesting to formulate a vitality model that allows the vitality to start from an arbitrary age and the cohort effect to vary as the individual ages. To achieve this, we consider a vitality model with an age-adjusted cohort effect, extended again from the Gompertz Law. The following specification is applied:
\begin{itemize}
  \item[C1] The initial vitality $V_{x,y}(0)$ for an individual born in year $y$ and age $x$ at time $t=0$ follows
    \begin{align*}
        V_{x,y}(0) \sim \text{Exp}\left(  \Gamma(y)^{\frac{x_c - x}{x_c}} \right),
    \end{align*}
    where $x_c$ is a large and fixed value (e.g., $x_c=100$) and $\Gamma(y)$ is governed by a stochastic process (e.g., the geometric Brownian motion given by equation \eqref{eq:CohortProcess}).
    \item[C2] The trend component $Y(t)$ is specified by $\diff Y(t) = b(t) c(t)^{x+t}\diff t$, $t\ge 0$, where the dynamics of $b(t)$ and $c(t)$ are the same as those presented in Sections \ref{sec:vitality_dynamic_simple} and \ref{sec:vitality_dynamic_cohort}.
\end{itemize}

The above specification provides flexibility in modeling the interactions between cohorts and ages. More specifically, $\Gamma(y)$ captures the general level of initial vitality at birth for cohort year $y$, with the term $\frac{x_c - x}{x_c}$ reflecting an age-adjusted cohort effect as the individual ages from 0 to $x_c$. When $x=0$, the initial vitality follows $\text{Exp}(\Gamma(y))$ and the specification coincides with the one presented in Section \ref{sec:vitality_dynamic_cohort}. When $x=x_c$, the initial vitality follows $\text{Exp}(1)$ and the specification coincides with the one presented in Section \ref{sec:vitality_dynamic_simple}.

It is worth noting that, if $\Gamma(y) > 1$, the cohort effect will diminish as the individual ages, which coincides with the cohort effect observed in the M8 model.

\section{Model Applications} \label{sec:applications}

In this section, we provide a number of illustrative examples to showcase the application of a vitality mortality model in various research or practical problems. Specifically, we examine the following topics:
\begin{itemize}
    \item Pricing life annuity and insurance products based on individual vitality;
    \item Deriving optimal investment and consumption decisions using a vitality model;
    \item Explaining the biased estimation of survival rates from the vitality perspective;
    \item Modeling disability and recovery probabilities using vitality thresholds;
    \item Analyzing the cause of death using vitality components.
\end{itemize}
We believe that the vitality mortality model differs from existing mortality models not only in terms of technical details, but also by offering new interpretations and insights that cannot be explained using existing mortality models.

\subsection{Life insurance valuation}

The survival probabilities of a vitality mortality model often cannot be explicitly expressed, or are overly complicated to be used in practice. However, for the purpose of calculating life expectancies and pricing life insurance products, it is convenient to work with the death time $\tau(v)$, defined as the first passage time of vitality crossing zero for an individual with age $x$ and initial vitality $v$. 

The life expectancy of an individual with given initial vitality $v$ can be expressed as
\begin{align*}
    \overset{\circ}{e}_x(v) = \mathbb{E}[\tau(v)] = \int_0^{\infty} t f_{\tau}(t;v) \diff t,
\end{align*}
where $f_{\tau}(t;v)$ is the density function of $\tau(v)$. Using the specification from Section \ref{sec:vitality_static_traditional} (i.e., the Gompertz law), the death time is
\begin{align}\label{eq:vitality_tau_gompertz} 
    \tau(v) &:= \left\{ t \geq 0 : v- \frac{b}{\ln c}c^{x}(c^t-1) = 0 \right\} \nonumber  \\ 
    &= \frac{1}{\ln c} \ln \left(\frac{v \ln c}{b c^x} + 1 \right),
\end{align}
and the life expectancy for a population of individuals with initial vitality $V(0)\sim \text{Exp}(1)$ is 
\begin{align*}
  \overset{\circ}{e}_x = \mathbb{E} \left[ \tau(V(0)) \right] =  \int_0^{\infty} \frac{1}{\ln c} \ln   \left(\frac{v \ln c}{b c^x} + 1 \right)  e^{-v} \diff v.
\end{align*}

Similarly, the price of a life annuity with \$1 per year payable continuously to an individual with age $x$ and initial vitality $v$ is
\begin{align*}
    \bar{a}_x(v) = \mathbb{E}\left[ \bar{a}_{ \annuity{\tau(v)} } \right] = \int_0^{\infty} \bar{a}_{\annuity{t}} f_{\tau}(t;v) \diff t,
\end{align*}
where $\bar{a}_{\annuity{t}} = \frac{1-e^{-\delta t}}{\delta}$ with $\delta$ being the force of interest rate. The annuity price for a population of individuals with initial vitality $V(0)\sim \text{Exp}(1)$ is
\begin{align*}
    \bar{a}_{x} &= \mathbb{E}\left[ \bar{a}_{\annuity{\tau(V(0))}} \right] = \int_0^{\infty} e^{-v} \int_0^{\infty} \bar{a}_{\annuity{t}} f_{\tau}(t;v) \diff t \diff v.
\end{align*}
Using the relation between annuity and insurance, we immediately have $\bar{A}_x(v) = \mathbb{E}\left[e^{-\delta \tau(v)}\right] = 1 - \delta \bar{a}_x(v)$ and $\bar{A}_x =  1- \delta \bar{a}_x$.

If the density function of $\tau(v)$ can be explicitly derived, then the life expectancy, annuity price, and insurance price can also be expressed explicitly in an integral form. This is the case for our proposed model if the diffusion component is not considered. When the density function of $\tau(v)$ cannot be explicitly written, an approximation method is then needed to provide accurate estimates for life expectancy and prices. For example, consider a vitality mortality model with initial vitality, trend, and diffusion (i.e., components C1-C3), then the problem becomes finding the first exit density of Brownian motion with a curved boundary. A number of studies, including \cite{jennen1981first}, \cite{jennen1985second}, and \cite{durbin1992first}, has provided accurate and easy-to-implement approximation methods to handle this problem. \cite{skiadas2010comparing} has specifically applied approximation methods to vitality modeling. For more details on approximating the density function of the death time when C4 is absent, interested readers are referred to Appendix \ref{appendix: approximation}.

We remark that, under the vitality modeling approach, the evaluation of life expectancy and mortality-related products can be applied to specific groups of individuals, assuming that the vitality information can be obtained or estimated. For instance, by altering the distribution of initial vitality (or other components as well), the mortality of a selected healthy population or a disabled population can be analyzed. Consequently, depending on the population's health status, a more accurate price can be produced for different life insurance products. This pricing approach, which tailors the vitality mortality model to reflect specific health conditions, allows for a more personalized experience in actuarial valuation. Additionally, this approach can be extended to assess the impact of lifestyle changes, medical advancements, and other public health interventions on life expectancy and insurance prices.

\subsection{Optimal investment and consumption strategies}

One promising application of vitality modeling is in lifetime planning, e.g., Merton's portfolio problem. Individuals, when making decisions about their investment portfolio, consumption, insurance purchases, medical expenses, educational investment, etc., will take into account their health level at the time. The vitality mortality model provides a natural way to frame the optimization problem. There exist a few studies that utilize the vitality-based approach. For example, \cite{chen2022optimal} modeled vitality as a geometric Brownian motion to study the optimal insurance decision, and \cite{cheng2023reference} modeled vitality using a jump process to study the optimal medical investment. In this subsection, we provide an illustration of Merton's portfolio consumption problem under our vitality mortality model.

\subsubsection{The market and vitality processes}
Following the seminal work of \cite{merton1969lifetime}, we construct a continuous-time model for the financial market and apply the vitality model \eqref{eq:Vt_AlterStochastic} without jumps. Denote $(\Omega, \mathbb{F}, \mathbb{P})$ as a complete probability space with a right-continuous filtration generated by a two-dimensional standard Brownian Motion $(B_S(t), B_V(t))$. An individual allocates two representative assets, one risk-free $S_0(t)$ and one risky $S_1(t)$ with the following dynamics:
\begin{align*}
    \diff S_0(t) &= rS_0(t)\diff t, \quad S_0(0)>0,\\
    \diff S_1(t) &= (r+\theta \sigma_S) S_1(t)\diff t + \sigma_S S_1(t) \diff B_S(t), \quad S_1(0)>0,
\end{align*}
for $t\ge 0$, where $r$ is the risk-free rate, $\theta$ is the market price of risk, and $\sigma_S$ is the vitality of the risky asset. 

Let $\pi(t)$ be the portfolio weight invested in the risky asset, and $\zeta(t)$ be the consumption amount. Then the asset $A(t)$ of an individual has the following dynamics
\begin{align*}
    \diff A(t) = A(t) \left[ 
r + \pi(t)  \theta  \sigma_S \right]\diff t + \sigma_S  \pi(t) A(t)\diff B_S(t) - \zeta(t)\diff t, \quad A(0) = A_0.
\end{align*}

We now adopt the vitality mortality model from Eq.\eqref{eq:Vt_AlterStochastic} without jumps, which generalizes the Gompertz law if the initial vitality follows the Gompertz distribution; that is,
\begin{align*}
    \diff V(t) = -\delta \diff t + \sigma_V\diff B_V(t), \quad V(0) = V_0 > 0,
\end{align*}
for $t\ge 0$, and we assume an individual is able to track their vitality level over time.

\subsubsection{The optimal investment and consumption problem}
An individual is maximizing their lifetime utility by choosing the optimal investment and consumption strategies. Therefore, for any time $t\geq 0$, we define the value function $J(A(t)=a, V(t)=v)$ as
\begin{align}\label{eq:objective}
    J(A(t)=a, V(t)=v) &= \max_{\pi, \zeta\in \Pi} \mathbb{E}_{t,a,v} \left[\int_{t}^{\tau} e^{-\beta  (s-t)}\cdot u(\zeta(s)) \diff s + \lambda \cdot  u(A(\tau))\right],
\end{align}
where $\tau$ is the first passage time that vitality level crosses zero, $\beta>0$ is the time-preference rate, $u(\zeta) = \ln \zeta$ is the utility function (log-utility is selected for illustration), $\lambda \geq 0$ is the weight given to the utility of terminal asset (i.e., bequest), $\mathbb{E}_{t,a,v} = \mathbb{E}[\cdot|t, A(t) = a, V(t) = v]$, and $\Pi$ is the admissible set with definition given in Appendix \ref{app:optimal}. Notice that the optimization is time-independent since $V(t)$ is stationary and thus the value function does not depend on the time $t$.

The optimization problem given in equation \eqref{eq:objective} can be solved by the Hamiltonian-Jacobi-Bellman equation, as discussed in \cite{chen2022optimal}. The solution of the optimal investment strategy $\pi^*(t)$ and consumption strategy $\zeta^*(t)$ are summarized in the remark below, where the details of derivations are presented in Appendix \ref{app:optimal}.

\begin{theorem}
The optimal investment and consumption strategies of Problem \eqref{eq:objective} are
\begin{align*}
    \pi^*(t) & = \frac{ \theta  }{\sigma_S}, \quad \zeta^*(t) = \frac{A(t)}{f(V(t))}.
\end{align*}
where
\begin{align*}
    f(v) &=  \left(\lambda - \frac{1}{\beta}\right)  e^{k_1 v} + \frac{1}{\beta}, \quad \text{with} \quad  k_1 = \left\{\begin{array}{ll}
       \frac{ \delta - \sqrt{\delta^2 + 2 \sigma^2_V  \beta} }{ \sigma^2_V }<0,  & \sigma_V>0 \\
       \frac{-\beta}{\delta},  &  \sigma_V = 0
    \end{array}\right.
\end{align*}
As for comparison, we have Merton's optimal portfolio and consumption strategies presented in Appendix \ref{app:optimal}. The optimal investment strategy follows the same as Merton's optimal portfolio in an infinite time horizon, and the consumption strategy also collapsed to Merton's case (without mortality) if vitality is infinite. 
\end{theorem}

\subsubsection{Further remarks}
It is not surprising to observe that the individual's consumption depends on the vitality level, but what is interesting here is that they are not always negatively correlated, as one would expect. When the bequest motive is moderate or low (i.e., $\lambda<1/\beta$), a reasonable consumption profile is observed such that a higher vitality represents a longer time to live with, and thus a lower consumption must be made to make sure an individual will not run out of money before they are deceased. 

However, when the bequest motive is significantly large (i.e., $\lambda>1/\beta$), decreasing the vitality actually increases the consumption, a counter-intuitive result at first glance. This is due to the fact that individuals are overly concerned about their bequest, and thus leads to an overly conservative consumption initially. Once they are aging and the time for death becomes more certain, they become more aware of their over-saving and may wish to increase their consumption a bit, as it would not jeopardize their bequest goals but will significantly enhance their lifetime utility. Figure \ref{fig:vitality_consumption} illustrates the sample paths for the optimal consumption rate (as a \% of wealth) and the corresponding vitality process. The left panel represents a low bequest motive individual ($\lambda < 1/\beta$), and the right panel represents a high bequest motive individual ($\lambda>1/\beta$). It is evident that a higher bequest motive results in a significantly lower consumption profile and a positive correlation between the consumption rate and vitality level. Conversely, when the bequest motive is low, the opposite pattern is observed.

\begin{figure}[H]
    \centering
    \includegraphics[width=1\linewidth]{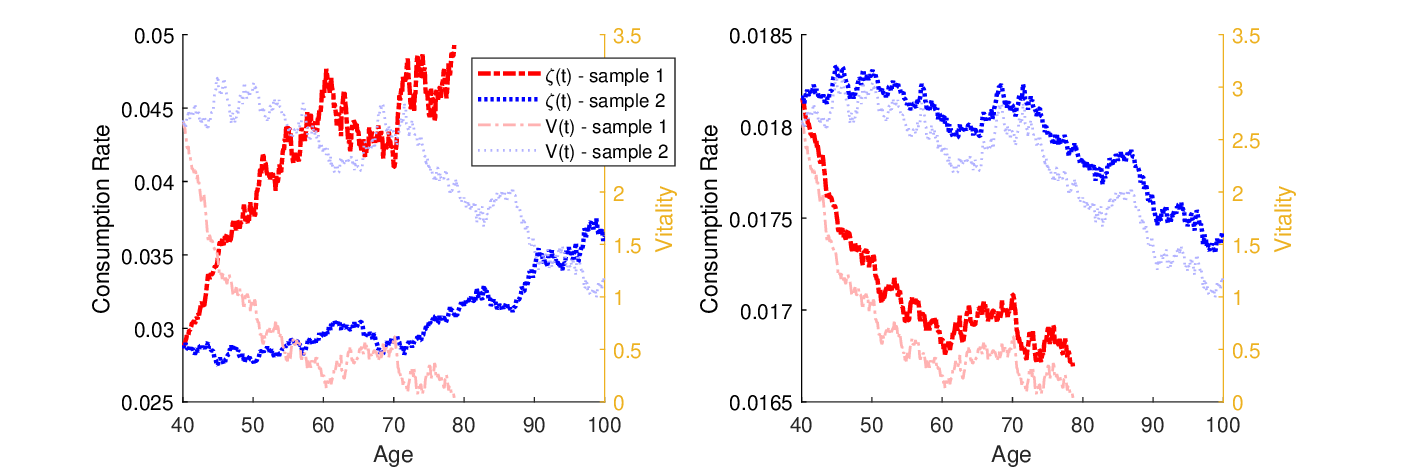}
    \caption{Sample trajectories of vitality process (transparent lines) and optimal consumption rate (solid lines) for relatively low bequest motive (left) and high bequest motive (right).}
    \label{fig:vitality_consumption}
\end{figure}

We emphasize that by incorporating the vitality mortality model into the optimal investment and consumption problem, new insights can be observed. 
Of course, our stylized model is overly simplified and can be extended in numerous ways. First, although an individual is able to track their health status, it is impossible to measure their vitality level accurately. To better reflect reality, noises should be introduced for the vitality level observed and lead to an optimal investment and consumption problem with partial information (for references on partial information, see for example, \cite{lakner1995utility, pham2001optimal, bauerle2007portfolio, ceci2012utility}, etc.).

Second, ambiguous beliefs regarding survival rates can influence individual decisions. In the model described above, individuals may face uncertainty in realizing their initial vitality levels $V(0)$ and the depletion rates $\delta$. For those averse to uncertainty, life-cycle investment and consumption planning will differ significantly. The concept of model uncertainty in life-cycle planning was pioneered by \cite{maenhout2004robust} and has more recently been applied to ambiguous mortality risk by \cite{young2016lifetime, shen2019life} within the context of traditional mortality laws. 

Third, it is possible to model medical expenditure as a function of vitality when studying the optimal consumption problem. Similarly, it is possible to express the vitality depletion rate $\mu_x(t)$ as a decreasing function of the medical expenditure when studying the optimal investment problem on personal health, see for example \cite{forster1989optimal, hugonnier2013health, cheng2023reference}, etc. One can explore numerous other directions where the optimization problem can be extended, highlighting the significant potential of the proposed vitality model in the area of lifetime planning.

\subsection{Subjective beliefs}
Numerous empirical studies have documented that individuals tend to have strong biases in their subjective beliefs about survival probabilities. In particular, individuals often underestimate survival probabilities at younger ages but overestimate them at older ages, as indicated in \cite{ludwig2013parsimonious}, \cite{groneck2016life}, and \cite{kalwij2021accuracy}. This empirical observation can be easily explained and understood from the vitality perspective. Specifically, when a person underestimates their initial vitality and underestimates their depletion rate, they will eventually underestimate the vitality level (and hence survival probability) at younger ages and, as time goes on, overestimate the vitality level (and hence survival probability) at older ages.

In addition, empirical studies indicate that individuals update the probability associated with pessimistic and optimistic sentiments when confronted with health shocks. This phenomenon can also be explained under our vitality-based modeling framework by the overestimation of a vitality jump, which results in an underestimation of the survival probability after the shock. Similar to a recent study by \cite{apicella2024behavioral}, our framework does not need to make specific assumptions about the difference in belief bias between different ages or the difference in optimism between different ages. In contrast to \cite{apicella2024behavioral}, where the authors construct their model solely to explain the belief bias, the proposed framework is constructed to accurately reflect the vitality experience of each individual while retaining the flexibility to explain belief bias.

Another interesting observation that can be reflected by our vitality model is the distinction between subjective belief in health status and subjective belief in life expectancy. Specifically, if a person correctly self-assesses as having an \emph{average} health level among the whole population, they may still over- or under-estimate their life expectancy if they falsely believe that their life expectancy will also be the \emph{average} among the whole population. Using the specification from Section \ref{sec:vitality_static_traditional} with $b = 0.0001744$ and $c = 1.082$ as an illustrative example, the population average life expectancy at age 60 is 17 years. However, a 60-year-old individual who believes that they have an average initial vitality level (which is one since the initial vitality follows an exponential distribution with a mean of one) will expect to live for 20.4 more years as given by equation \eqref{eq:vitality_tau_gompertz}. This observation can be summarized in the following proposition.

\begin{prop}
    Given that $V(0)\sim F_0$ where $F_0$ is a distribution function with positive support, the deterministic vitality depletion rate is strictly positive $\mu(t)>0$, and there are no diffusion and jump components, then 
    \begin{align*}
        \mathbb{E}[\tau] \leq \tau (\mathbb{E}[V(0)]),
    \end{align*}
    where $\tau(\cdot)$ is the conditional death time defined in Section \ref{sec:applications}. The average life expectancy is smaller than the life expectancy of an individual with an average vitality level.
\end{prop}

The proof immediately follows from the fact that the inverse function of a strictly increasing function is strictly concave, followed by Jensen's inequality. For example, for the vitality counterparts of traditional mortality law displayed in Section \ref{sec:vitality_static_traditional}, $Y(t)$ is a strictly increasing function of time, then, the death time given the initial vitality value is a concave function. The proposition suggests that even if an individual can objectively determine their health level relative to the population, it does not guarantee an unbiased estimate of their survival probabilities. However, if the individual judges their vitality level based on quantiles rather than the average (e.g., the median of the population vitality level), they will be able to arrive at an unbiased estimate of their life expectancy.

\subsection{Disability modeling}
It is natural to incorporate disability risk into our vitality model, similar to how mortality risk is incorporated. As the vitality level depletes, the risk of becoming disabled should increase over time. By setting a threshold above zero, we can classify an individual as disabled once their vitality level falls below this threshold. This extension allows the vitality model to capture the probability of both mortality and disability, and provide new insights into the interplay of mortality and morbidity risks.

\subsubsection{Permanent disability model}
Consider that there exists a threshold $\psi>0$ such that when the vitality falls below $\psi$, the person becomes disabled. Consider the specification with an exponentially distributed initial vitality (C1) and a Gompertz law specified depletion rate (C2). The resulting vitality model will form a permanent disability model, since $Y(t)$ is a strictly increasing function of time. The probability that a healthy person with initial vitality above $\psi$ will remain healthy for $T$ years is
\begin{align*}
   \Pr\left(V(0) -Y(T)>\psi\big|V(0)>\psi \right)&= \frac{\int_0^{\infty} e^{-v} \cdot \mathbb{1}_{v - Y(T) > \psi} \diff v}{\int_0^{\infty } e^{-v }\cdot \mathbb{1}_{v>\psi}  \diff v }\\
   &= \frac{ \int_{Y(T)+\psi}^{\infty} e^{-v}    \diff v }{\int_\psi^{\infty } e^{-v }   \diff v} = \exp( -Y(T)  ),
\end{align*}
which is equivalent to the survival probability, and the disability probability is simply $1 - \exp(-Y(T))$. This result holds even if we assume that each individual has a different disability threshold, such that the threshold $\Psi$ has a density function $\pi(\psi)$. In this case, the probability of remaining healthy for $T$ years is
\begin{align*}
    \frac{\int_0^{\infty} e^{-v}\cdot \int_0^{\infty} \pi(\psi) \cdot \mathbb{1}_{v-\psi - Y(T)>0}\diff \psi \diff v}{ \int_0^{\infty} e^{-v} \cdot \int_0^{\infty} \pi(\psi) \cdot \mathbb{1}_{v-\psi>0} \diff \psi \diff v}
    &= \frac{\int_{0}^{\infty} \pi(\psi)  e^{-(\psi + Y(T))}\diff \psi }{  \int_0^{\infty}\pi(\psi) \cdot e^{-\psi}\diff \psi } = \exp( -Y(T)  ).
\end{align*}
The above setting of the disability threshold $\omega$ can be extended in other directions, such as considering an age- and time-dependent threshold. We leave this for future research.

\subsubsection{Disability model with recovery}
Vitality models that have only an initial component (C1) and a non-decreasing trend component (C2) cannot accommodate the possibility of recovery if the disability is determined by the relative position of the vitality level with respect to a disability threshold. However, when the diffusion component (C3) is included, the model becomes capable of describing disability with the potential for recovery. 

For illustrative purposes, here we consider the simplest model that is able to capture the recovery probability, specifically, we consider the alternative Gompertz specification with a constant depletion rate and a diffusion component (C3) modeled as a Brownian motion multiplied by a constant volatility, along with a constant disability threshold. The result is summarized in the following proposition. 

\begin{prop}
Consider the vitality model with an initial vitality level $V(0)\sim F_0$, a constant vitality depletion rate $\mu(t) = \delta$, and a diffusion component as a Brownian motion with volatility $\sigma$. Consider a constant disability threshold $\psi$, then the $T$-year disabled probability for a healthy person aged $x$ is
\begin{align*}
    \int_{\psi}^{\infty} \frac{1}{1-F_0(\psi)}\cdot \left(\int_0^{\frac{v}{\sigma}} \int_{\frac{v-\psi}{\sigma} }^{\frac{v}{\sigma}} f(m,w;T) \diff w \diff m \right) \diff F_0(v),
\end{align*}
and the recovery probability for a disabled person aged $x$ is
    \begin{align*}
\int_0^{\psi}\frac{1}{F_0(\psi)} \cdot \left(\int_0^{\frac{v}{\sigma}} \int_0^{\frac{v-\psi}{\sigma} } f(m,w; T) \diff w \diff m \right)\diff F_0(v),
    \end{align*}
    where  
    \begin{align*}
    f(m,w;T) = \begin{cases}
        \frac{2(2m-w)}{T\sqrt{2\pi T}} e^{\frac{\delta w}{\sigma} - \frac{\delta^2T}{2\sigma^2} - \frac{(2m-w)^2}{2T}}, & w\leq m, m\geq 0\\ \\
        0, & \text{elsewhere}.
    \end{cases} 
    \end{align*}
\end{prop}

The proof is given in Appendix \ref{app:vitality_disability}. The probabilities can be evaluated by numerically solving the triple-integral. Notice that the choice of the disability threshold here does not impact the death probability, suggesting an approach to seamlessly integrate disability modeling and mortality modeling, which contrasts with traditional multi-state modeling for disability. Of course, the model above might be overly simplified; further exploration in model extension and empirical examination with disability data would be necessary.

\subsection{Cause-of-death modeling}

In our vitality-based modeling framework (equation~\eqref{eq:Vt}), death can be triggered either by the natural depletion of vitality (represented by components C2 and C3) or by a sudden jump due to a fatal accident (represented by component C4). Therefore, it is of practical interest to distinguish the cause of death based on the components of $V(t)$. In this subsection, we demonstrate how the vitality model can provide probabilities of different causes of death.

For illustrative purposes, we focus on the vitality model with components C1, C2, and C4 only. Specifically, we assume that component C2 follows the Gompertz law, while component C4 is modeled as a compound Poisson process with jump intensity $\lambda > 0$ and i.i.d. jump sizes $\left\{Z_{i}\right\}_{i \in \mathbb{N}_{+}}$. The jump sizes have a common cumulative distribution function denoted as ${F}_Z(z) = 1-\mathrm{e}^{-\alpha z}$ and survival function $\bar{F}_Z(z) = 1-F_Z(z)$. The resulting vitality model is expressed as follows:
\begin{align*}
V(t) = V(0) - \frac{b c^x}{\ln c}\left(c^t-1\right) - \sum_{i=1}^{N(t)} Z_i, \quad t\ge 0.
\end{align*}

Let us denote $\tau_J = \inf\{ t\ge 0: V(t) < 0 \}$ and $\tau_Y = \inf \{t \geq 0: V(t)= 0\}$. In other words, $\tau_J \cdot \mathbb{1}(\tau_J < \tau_Y)$ represents the death time at which death is caused by an accident, while $\tau_Y\cdot \mathbb{1}(\tau_Y < \tau_J)$ represents the death time at which death is caused by natural decay. It is clear that $\tau= \tau_Y \wedge \tau_J$. The next theorem gives the expressions of the Laplace transforms and densities of $\tau_J$ and $\tau_Y$, given that the initial vitality level is $v$. For ease of presentation, we let $\mathrm{Pr}_v := \mathrm{Pr}(\cdot| V(0) = v)$ and $\mathbb{E}_v[\cdot]: = \mathbb{E}[\cdot| V(0) = v]$. 

\begin{theorem} \label{decompose tau}
	For $q\ge 0$, the Laplace transforms of the time of death due to an accident and the time of death due to natural decay in vitality level are given by the following expressions:
	    \begin{align} \label{cpp: LT jump}
		     \mathbb{E}_v\left[\mathrm{e}^{-q\tau_{J}}\mathbb{1}\left(\tau_{J}<\tau_{Y}\right)\right]  =&\int_{0}^{t_v^{*}}\lambda \mathrm{e}^{-(q+\lambda)t-\alpha b^*_v(t)}\cdot I_{0}\left(2\sqrt{\alpha\lambda t b^*_v(t)}\right)\mathrm{d}t,
	\end{align}
and
\begin{align}\label{cpp: LT creep}
	\mathbb{E}_{v}\left[\mathrm{e}^{-q\tau_{Y}}\mathbb{1}\left(\tau_{Y}<\tau_{J}\right)\right]
	=&\int_{0}^{t^{*}_v}\mathrm{e}^{-\left(q+\lambda\right)t-\alpha  b^*_v(t)}\cdot bc^{x+t}\sqrt{\frac{\alpha\lambda t}{b^*_v(t)}}I_{1}\left(2\sqrt{\alpha\lambda t b^*_v(t) }\right)\mathrm{d}t+\mathrm{e}^{-(q+\lambda)t^{*}_v},
\end{align}
where $t^{*}_v= \frac{\ln(bc^{x}+v\ln c)-\ln b}{\ln c}-x$ (which refers to the remaining lifetime without any accidents/jumps), $b^*_v(t) =v-bc^{x}\left(c^{t}-1\right)/\ln c$, and
$$I_k(x):=\sum_{n=0}^{\infty} \frac{(x / 2)^{k+2 n}}{n ! \Gamma(k+n+1)}$$
is the modified Bessel function of the first kind. Moreover, by the uniqueness of the Laplace transform, it follows that
	\begin{align*} 
		\mathrm{Pr}_v\left(\tau_{J}\in\mathrm{d}t,\tau_{J}<\tau_{Y}\right) &= \mathbb{1}(t<t^{*})\lambda\mathrm{e}^{-\alpha b^*_v(t)-\lambda t}I_{0}\left(2\sqrt{\lambda\alpha t b^*_v(t)}\right)\mathrm{d}t,
	\end{align*}
and
\begin{align*}
	 \mathrm{Pr}_{v}\left(\tau_{Y}\in\mathrm{d}t,\tau_{Y}<\tau_{J}\right)
	=&\mathbb{1}(t<t^{*})\mathrm{e}^{-\lambda t-\alpha b^*_v(t) }\cdot bc^{x+t}\sqrt{\frac{\alpha\lambda t}{b^*_v(t)}}I_{1}\left(2\sqrt{\alpha\lambda t b^*_v(t)}\right)\mathrm{d}t+\mathrm{e}^{-\lambda t}\delta_{t^{*}}(\mathrm{d}t),
\end{align*}
where $\delta_{t_v^{*}}(\cdot)$ is the Dirac mass at $t_v^*$.
\end{theorem}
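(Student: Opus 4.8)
The plan is to exploit the \emph{monotonicity} of the vitality process in this diffusion-free specification. Since the Gompertz trend $Y(t)=\frac{bc^{x}}{\ln c}(c^{t}-1)$ and the compound-Poisson accumulation $J(t)=\sum_{i=1}^{N(t)}Z_i$ are both non-decreasing, $V(t)=v-Y(t)-J(t)$ is non-increasing (strictly decreasing between jumps, with downward jumps). Consequently the first passage of $V$ below zero is a single crossing, and survival up to time $t^{-}$ is equivalent to the single event $\{V(t^{-})>0\}$, with no need to track a running minimum. I would therefore derive the two sub-densities directly by infinitesimal rate arguments and then read off \eqref{cpp: LT jump}--\eqref{cpp: LT creep} as their $e^{-qt}$-transforms, at which point the stated ``uniqueness of the Laplace transform'' step is automatic. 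Throughout I write $b_v^{*}=b_v^{*}(t)=v-Y(t)$ for the jump-free vitality at time $t$, noting $b_v^{*}(t_v^{*})=0$ and $Y'(t)=bc^{x+t}$.

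For accidental death, death in $[t,t+dt)$ occurs exactly when a jump arrives (probability $\lambda\,dt$) whose exponential size exceeds the current vitality, on the survival event. By the exponential tail the fatal probability is $e^{-\alpha V(t^{-})}$, so the sub-density is $\lambda\,\mathbb{E}_v[e^{-\alpha V(t^{-})}\mathbf{1}(V(t^{-})>0)]$. Writing $V(t^{-})=b_v^{*}-J(t^{-})$, factoring out $e^{-\alpha b_v^{*}}$, and conditioning on $N(t^{-})=n$ (so that $J(t^{-})$ is $\mathrm{Gamma}(n,\alpha)$ for $n\ge1$), the tilt $e^{\alpha J(t^{-})}$ cancels the Gamma's exponential factor, the truncation $\{J<b_v^{*}\}$ integrates to $(\alpha b_v^{*})^{n}/n!$, and weighting by the Poisson probabilities collapses the resulting series into $e^{-\lambda t}I_0(2\sqrt{\alpha\lambda t b_v^{*}})$ via $I_0(z)=\sum_n (z/2)^{2n}/(n!)^2$. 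Multiplying by $e^{-qt}$ and integrating over $t\in(0,t_v^{*})$ gives \eqref{cpp: LT jump}.

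For natural death the event is a \emph{creeping} passage: $V$ reaches zero continuously through the trend, with no jump at $t$. Here monotonicity pays off again, since $\{J(t^{-})=b_v^{*}(t)\}$ already forces $J(s)<b_v^{*}(s)$ for every $s<t$ (as $J$ is non-decreasing and $b_v^{*}$ decreasing), so no auxiliary path constraint survives. The creeping sub-density is then the continuous decay speed times the local density of $V(t^{-})$ at $0^{+}$, namely $bc^{x+t}\,f_{J(t)}(b_v^{*})$. Conditioning on $N(t)=n\ge1$ and evaluating the Gamma density at $b_v^{*}$ produces $\sum_{n\ge1}(\lambda t)^{n}\alpha^{n}(b_v^{*})^{n-1}/(n!(n-1)!)$, which reindexes into $\sqrt{\alpha\lambda t/b_v^{*}}\,I_1(2\sqrt{\alpha\lambda t b_v^{*}})$ using $I_1(z)=\sum_n (z/2)^{2n+1}/(n!(n+1)!)$. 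The degenerate $n=0$ term (no jumps) contributes the jump-free depletion at $t_v^{*}$ with mass $e^{-\lambda t_v^{*}}$, i.e.\ the atom $e^{-\lambda t}\delta_{t_v^{*}}(dt)$. Taking the $e^{-qt}$-transform yields \eqref{cpp: LT creep}.

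The conceptual heart of the argument is the splitting of the crossing into a jump part and a creeping part and, in particular, the identification of the creeping rate as (decay speed) $\times$ (density of $V(t^{-})$ at $0^{+}$); this is precisely where the non-increasing structure of $V$ is indispensable. Once the two probabilistic rates are in place, the remainder---conditioning on the Poisson count, integrating the Gamma densities, and recognizing the power series as $I_0$ and $I_1$---is routine. The step I expect to require the most care is the boundary behavior as $t\uparrow t_v^{*}$, where $b_v^{*}\to0^{+}$: one must check that the continuous creeping density behaves correctly there and that the no-jump scenario is counted exactly once as the atom at $t_v^{*}$, so that the accident density, the continuous natural density, and the atom together reconcile with the overall first-passage law $\tau=\tau_Y\wedge\tau_J$.
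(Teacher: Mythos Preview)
Your proposal is correct and relies on the same core ingredients as the paper (conditioning on the Poisson count, the Gamma distribution of $J$ under exponential jumps, and summing the resulting power series into $I_0$ and $I_1$), but you organize the argument differently. The paper decomposes according to the jump index: for \eqref{cpp: LT jump} it sums $\mathbb{E}_v[e^{-qT_n}\mathbf{1}(\tau_J=T_n)]$ over $n$ by conditioning on $T_n=t$ via the Erlang density and using $F_Z^{*(n-1)}-F_Z^{*(n)}$; for \eqref{cpp: LT creep} it conditions on the last pre-death jump $(T_n,V(T_n))=(t,z)$, solves $Y(t+s)-Y(t)=z$ for the residual time $s$, imposes $T_{n+1}-T_n>s$, and only recovers the integral over the crossing time after a change of variable. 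You instead parametrize by the crossing time $t$ from the outset and compute infinitesimal rates: $\lambda\,\mathbb{E}_v[e^{-\alpha V(t^{-})}\mathbf{1}(V(t^{-})>0)]$ for the jump part and $Y'(t)f_{J(t)}(b_v^{*})$ for the creeping part, so that the sub-densities emerge directly and \eqref{cpp: LT jump}--\eqref{cpp: LT creep} follow by taking $e^{-qt}$-transforms rather than the other way around. Your route is somewhat cleaner and makes the role of the monotonicity of $V$ explicit (the paper uses it implicitly when it writes $\mathbb{1}(v-Y(t)>z)$ without justifying that earlier jumps cannot have caused death); the paper's route has the minor advantage of making the last-jump-plus-deterministic-drift picture for creeping transparent, which generalizes more readily to non-exponential jump distributions.
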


We note that the proof of Theorem \ref{decompose tau} can be modified to accommodate more general assumptions on the vitality dynamics $V(t)$. For instance, the trend component $Y(t)$ could be any general non-decreasing continuous function, and the jump component $J(t)$ could be a renewal process with positive jumps. Moreover, instead of a single jump process for the jump component, multiple jump processes can be incorporated to represent various causes of death, such as car accidents, murder, suicide, cancer, etc. In this way, a more detailed cause-of-death analysis can be performed.

\section{A Numerical Illustration} \label{sec:estimation}

In this section, we provide an illustrative example on the numerical estimation procedure of the proposed framework. We consider the jump-diffusion vitality model discussed in Section \ref{sec:vitality_jump}. Recall that the vitality process $V(t)$ is described by equation \eqref{eq:Vt_JumpDiffusion} while the resulting survival probability $\Pr(\tau > T)$ is given in equation \eqref{eq:SurvProb_JumpDiffusion}.

\subsection{Simplifications}
The computational requirements for evaluating equation \eqref{eq:SurvProb_JumpDiffusion} are quite intensive. For ease of computation, we make the following assumptions to achieve an approximation:
\begin{itemize}
    \item The force of mortality is constant between integer ages.
    \item All of the jumps are fatal accidents (i.e., $Z_i = \infty$ for all $i$). 
\end{itemize}

Under the constant force of mortality assumption, $Y(t)$ can be written as a piece-wise linear function:
\begin{align*}
    Y(t) = \alpha_i (t-\lfloor t \rfloor) + \beta_i,
\end{align*}
for $i\leq t < i+1$ and $i \in \{0,1,2, \ldots\}$, where $\beta_i = \int_0^i \mu_x(s)\diff s$ and $\alpha_i= \int_i^{i+1}\mu_x(s) \diff s$.
Using the well-known formula for the non-crossing probability of a linear boundary $\alpha t+ \beta$ before time $s$; that is,
\begin{align*}
    \Pr\left(B(t)<\alpha t + \beta \text{ for all } t \le s\big|B(s) = x\right) = 1-\exp \left\{-\frac{2 \beta(\alpha s + \beta - x)}{s}\right\},
\end{align*}
together with the fatal jump assumption, the survival probability conditional on an initial vitality $v>0$ is given by
\begin{align*}
    {}_kp_x(v) = \mathbb{E}\left[ \left(\prod_{s=1}^k \mathbb{1}_{B(s)< \frac{v - Y(s)}{\sigma} } \left( 1- e^{- \frac{2(v - Y(s-1) - \sigma B(s-1))(v-Y(s) - \sigma B(s))  }{\sigma^2} }\right) \right) \exp\left( -\int_0^k \lambda(s)\diff s \right) \right],
\end{align*}
for $k \in \mathbb{N}^+$. It follows that the $T$-year survival probability becomes
\[
    \Pr(\tau > T) = \int_{0}^{\infty} {}_Tp_x(v) \diff F_0(v).
\]

\subsection{Model estimation}
Given the complexity of vitality mortality models, using solely the mortality data from a national population is inadequate for estimation purposes. Specifically, such dataset does not distinguish between deaths caused by natural aging and those resulting from accidents. This limitation makes it challenging to differentiate between accidental deaths, captured by the jump component, and natural deaths, represented by the initial and trend components. In terms of parameter estimation, the intensity rate $\lambda(t)$ of the Poisson process $N(t)$ cannot be easily estimated without information that explicitly categorizes accidental deaths.

To address this issue, we utilize mortality data that are specifically related to accidental causes of death. In particular, we use the WONDER dataset from the Centers for Disease Control and Prevention (CDC) to determine the intensity rate $\lambda(t)$. The remaining parameters, $\theta_{-\lambda} = \theta \backslash \{\lambda(t)\}_{t=0,1,\cdots}$, are then estimated using the maximum likelihood estimation method. This approach allows us to separate the jump component from the rest of the components. More specifically, we estimate $\theta_{-\lambda}$ using the U.S. male mortality data from the Human Mortality Database (HMD)\footnote{Human Mortality Database. Max Planck Institute for Demographic Research (Germany), University of California, Berkeley (USA), and French Institute for Demographic Studies (France). Available at www.mortality.org.}. We focus on the cohort born in 1910, with age 60 and above during the period of 1970 to 2016. We  estimate $\lambda(t)$ using the average accidental death rates for age groups in ten-year intervals during the period of 1968 to 2016 from the CDC WONDER dataset.

The likelihood function of our maximum likelihood estimation is given as follows. Consider a cohort of individuals all aged $x$ at time $t=0$. Let $E_x$ be the initial exposure at time $0$ and $D_{x+t}$ be the observed death counts at age $x+t$ for $t \in \mathbb{N}^+$. We assume that $D_{x+t}$ follows a multinomial distribution, which leads to the log-likelihood function as 
\begin{align*}
    l(\theta) = \sum_{t \geq 0} D_{x+t} \cdot \ln \left[ \Pr(\tau > t) -  \Pr(\tau > t+1)  \right] + \ln E_x! - \sum_{t\geq 0} \ln D_{x+t}!,
\end{align*}
where $\theta$ is the set of all model parameters. Note that our estimation procedure differs from \cite{shimizu2021does} and \cite{shimizu2023survival}, who used the least squares method to estimate the parameters of their survival energy models.

\subsection{Numerical results}
Using the Gompertz law to specify the trend component of the vitality dynamics, the jump-diffusion vitality model is specified as
\[
    V(t) = V(0) - \int_{0}^{t} b c^{x+s} \diff s - \sigma B(t) - \sum_{i=1}^{N(t)} Z_i,
\]
where $V(0) \sim \text{Exp}(1)$. Given that the intensity rate of $N(t)$ has been externally determined and we assumed $Z_i = \infty$ for all $i$, the remaining parameters to be estimated are $b$, $c$, and $\sigma$. In addition to the vitality model, we also estimate the parameters of the Gompertz law for comparison purposes. 

Table \ref{tab:energy_parameters} reports the estimated values of $b$, $c$, and $\sigma$ from the vitality model and the Gompertz law for both males (left panel) and females (right panel). The maximized log-likelihood of each model is reported at the last row of the corresponding table. We observe that the estimated values of $b$ and $c$ are similar between the vitality model and the Gompertz law, indicating that both models accurately describe the underlying mortality curve via their assumed functional form (i.e., $\mu_x(t) = b c^{x+t}$). Compared to the Gompertz law, the vitality model additionally has an estimated value of $\sigma$ at 0.0098 and 0.0034 for males and females, respectively. These values reflect the random fluctuations in vitality over the lifetime of the corresponding cohort of males and females. Overall, the vitality model achieved a higher maximized log-likelihood than the Gompertz law for both genders.

\begin{table}[H]
    \centering
    \begin{tabular}{l|l l}
         \hline \hline   
         Model &  Vitality & Gompertz\\
         \hline
         $b$ & $1.4819 \times 10^{-4}$  & $1.7440 \times 10^{-4}$ \\
         $c$  & 1.0838 & 1.0820\\
         $\sigma$ & $9.8267 \times 10^{-3}$ & \\
         \hline   
         $l(\theta)$ & -2,057 & -2,591 \\
         \hline \hline
    \end{tabular}\qquad
    \begin{tabular}{l|ll}
    \hline \hline
    Model & Vitality & Gompertz\\
    \hline 
         $b$ & $1.8107 \times 10^{-5}$  & $1.9915 \times 10^{-5}$ \\
         $c$  & 1.1058 & 1.1048\\
         $\sigma$ & $3.4139  \times 10^{-3}$ & \\
         \hline   
         $l(\theta)$ & -3,375 & -3,901 \\
         \hline \hline
    \end{tabular}
    \caption{Parameter estimates and maximized log-likelihood of the vitality model and the Gompertz law for males (left) and females (right).}
    \label{tab:energy_parameters}
\end{table}

\section{Conclusion} \label{sec:Conclusion}

In this paper, we introduced a vitality-based modeling framework that offers a new perspective on mortality modeling. The proposed framework integrates four unique components to capture the complexities of vitality dynamics over an individual's life course. The adaptability of the four components allows for the incorporation of various factors affecting mortality, providing deeper insights into population mortality trends and individual health trajectories.

Through both static and dynamic parameter specifications of the proposed framework, we demonstrated its versatility in addressing various mortality modeling scenarios. Our framework not only replicates traditional mortality laws but also extends to more advanced stochastic mortality models, offering a unified structure for understanding mortality dynamics. An illustrative example is provided to showcase how a vitality model can be fitted using the maximum likelihood estimation method.

Our investigation suggests that the vitality-based modeling framework offers significant practical applications, such as life insurance valuation, disability modeling, cause-of-death modeling, explaining subjective beliefs on mortality, and deriving optimal consumption solutions. We emphasize that the proposed framework can be further utilized in more nuanced scenarios to improve results and decision-making in various fields.

While we used four components in the proposed framework, the specific structures considered for each component are not exhaustive in this paper. There are numerous possible extensions for articulating the four components. For instance, one could consider a multivariate stochastic process to govern the diffusion component, reflecting the fact that multiple underlying forces may drive the randomness in an individual's vitality. The jump component can also be expanded to consider jumps categorized by different causes of death.

We have demonstrated in this paper that the proposed framework is capable of replicating mortality laws and factor-based models. Further exploration can be conducted to show that intensity-based models, such as the Vasicek process applied to the force of mortality, can also be mimicked by the proposed framework. It would be interesting to understand how the first passage time problem connects with an continuous-time intensity-based model. This connection could potentially provide new insights and more accurate survival probability estimates.

There are several other potential applications that we did not consider in this paper. For instance, a vitality model with dynamic parameters can be used to manage longevity risk by developing hedging strategies and pricing longevity-linked securities. Additionally, since our framework models individual vitality, it can be employed to measure the basis risk between an individual insurance product and an index-based security. These extensions would provide further advancements to capital market solutions for longevity risk management.

Lastly, we acknowledge that the vitality dynamics of an individual cannot be directly observed and easily estimated. To improve the estimation of a vitality model, one would need to use personal health data or clinical data on vital measures. Potential sources of such data could include personal medical devices, such as smartwatches or mobile health trackers. We remark that this research direction is highly related to InsurTech and could inspire future studies on personalized insurance products.

\section*{Acknowledgement}

Xiaobai Zhu acknowledges the support from the Research Grants Council of the Hong Kong Special Administrative Region, China (CUHK 24615523), and National Natural Science Foundation of China (No. 12301613). Zijia Wang acknowledges the support from internal grants from The Chinese University of Hong Kong and National Natural Science Foundation of China (No. 12401623).



\bibliographystyle{apalike}
\bibliography{energy_bib}


\newpage

\appendix
\section{Technical Details} \label{app:Proofs} 


\subsection{Plateau Death Models} \label{app:ProofsPlateau}

\subsubsection*{\underline{\textbf{Gamma-Gompertz Model}}} \label{app:gammagompertz}

Consider a hazard rate model with the hazard rate function being $h(t) = Z\cdot \mu_x(t)$, where $Z$ follows a Gamma distribution with shape parameter $\alpha$ and \emph{rate} parameter $\phi$. Note that the conditional $T$-year survival probability given $Z=z$ can be expressed as $\exp\left(-\int_0^Tz\cdot \mu_x(t)\diff t \right)$, then the unconditional $T$-year survival probability is

\begin{align*}
  \Pr(\tau > T) =&\int_0^{\infty} \left(e^{ -\int_0^T z\cdot \mu_x(t)\diff t  }\right)\cdot  \frac{\phi^{\alpha}  \cdot  z^{\alpha-1}\cdot  e^{-z\cdot \phi}  }{\Gamma(\alpha) }\diff z\\
    =& \int_0^{\infty} \underset{\text{density function of a Gamma distribution}}{\underbrace{ \frac{ e^{ -\left( \int_0^T\mu_x(t)\diff t + \phi \right)\cdot z } \cdot z^{\alpha-1}}{\Gamma(\alpha) } \cdot \left( \int_0^T\mu_x(t)\diff t + \phi \right)^{\alpha} }}\cdot  \frac{\phi^{\alpha}  }{\left( \int_0^T\mu_x(t)\diff t + \phi \right)^{\alpha}} \diff z\\
   =& \left(1 + \frac{ \int_0^T\mu_x(t)\diff t  }{\phi}   \right)^{-\alpha}.
\end{align*}
In particular, if $\mu_x(t) = bc^{x+t}$, it is known as the Gamma-Gompertz law.

\subsubsection*{\underline{\textbf{Exponential Initial Vitality and Gamma Decay Rate}}}

If we assume that the initial vitality level $V(0)\sim \text{Exp}(1)$ and $Y(t)$ follows
\begin{align*}
    \diff Y(t) = Z\cdot \mu_x(t)\diff t, \quad t\ge 0,
\end{align*}
where $Z$ follows a Gamma distribution with shape $\alpha$ and rate $\phi$ and independent of $V(0)$, then we have
\begin{align*}
    \Pr(\tau > T) &= \Pr\left(V(0) - Z\cdot \int_0^T\mu_x(s)\diff s > 0 \right)\\
    &= \int_0^{\infty} \Pr\left( V(0) > z\cdot \int_0^T\mu_x(s)\diff s \right) \cdot  \frac{\phi^{\alpha}  \cdot  z^{\alpha-1}\cdot  e^{-z\cdot \phi}  }{\Gamma(\alpha) } \diff z\\
    &= \int_0^{\infty} \exp\left(  - z\cdot \int_0^T\mu_x(s)\diff s \right) \cdot  \frac{\phi^{\alpha}  \cdot  z^{\alpha-1}\cdot  e^{-z\cdot \phi}  }{\Gamma(\alpha) }\diff z .
\end{align*}
The remaining steps follow exactly the same as before, and we will end up with the same survival function as in equation \eqref{eq:SurvProb_Plateau}.

\subsubsection*{\underline{\textbf{Dagum Decay Rate}}}

Assume that the initial vitality is a constant $v$, and
\begin{align*}
    \diff Y(t) = Z \cdot \mu_x(t)\diff t, \quad t\ge 0,
\end{align*}
where $Z$ follows a Dagum (Inverse-Burr) distribution with parameters $p$, $a$, $b$ and distribution function $F_Z(\cdot)$, then we have
\begin{align*}
    \Pr(\tau>T) &= \int_0^{\infty} \mathbb{1}\left( v- z\cdot \int_0^T\mu_x(s)\diff s  > 0  \right) \diff F_Z(z)\\
    &= \int_0^{\infty} \mathbb{1}\left( z< \frac{v}{\int_0^T\mu_x(s)\diff s } \right)\diff F_Z(z)\\
    &= \left( 1 + \left(\frac{ v }{b\cdot \left(\int_0^T\mu_x(s)\diff s\right) } \right)^{-a} \right)^{-p}.
\end{align*}
If $a = 1$, $p = \alpha$, and $b = \frac{v}{\phi}$, then we recover the same survival function as in equation \eqref{eq:SurvProb_Plateau}.

\subsection{Alternative Expression of Mortality Laws} \label{alternative mortality}

Noting that the vitality model given in equation \eqref{eq:Vt_AlterStochastic} is a special case of the class of SNLPs, one can adopt the fluctuation theory of SNLP (see \cite{kyprianou2014fluctuations} for more details) to study $\tau$. For example, suppose the distribution of $\{Z_i\}_{i\ge 1}$ is a mixture of exponential distributions with probability density function 
\begin{align*}
   f_Z(z)= \sum_{i=1}^{n} p_{i} \alpha_{i} \mathrm{e}^{-\alpha_{i} z}, \quad z>0,
\end{align*}
where $\alpha_i > 0$ and $\sum_{i=1}^n p_i =1$.~In this case, it is known from \cite{guerin2016joint} that the Laplace transform of $\tau$ given that $V(0) = v$ is 
\begin{align*}
\mathbb{E}_{v}\left[\mathrm{e}^{-q\tau}\right]&=q\sum_{i=1}^{n+2}\frac{\mathrm{e}^{\theta_{i}^{(q)}v}}{\theta_{i}^{(q)}g\left(\theta_{i}^{(q)}\right)}-\frac{q}{\theta_{1}^{(q)}}\sum_{i=1}^{n+2}\frac{\mathrm{e}^{\theta_{i}^{(q)}v}}{g\left(\theta_{i}^{(q)}\right)}, \quad q>0,
\end{align*}
where $\theta_{n+2}^{(q)}<\theta_{n+1}^{(q)}<\theta_{n}^{(q)}\ldots<\theta_{2}^{(q)}\leq0\leq\theta_{1}^{(q)}$ are the roots (in $y$) to the equation 
\begin{align*}
 \frac{1}{2}\sigma^{2}y^{2}-\delta y+\sum_{i=1}^{n}\frac{\lambda p_{i}\alpha_{i}}{y+\alpha_{i}}-\lambda -q =0,
\end{align*}
and 
\begin{align*}
    g(y)=\sigma^{2}y-\delta-\sum_{i=1}^{n}\frac{\lambda p_{i}\alpha_{i}}{\left(y+\alpha_{i}\right)^{2}}.
\end{align*}
Then, by either analytically or numerically invert the above Laplace transform with respect to $q$, one can obtain the distribution function of $\tau$.

\subsection{Vitality Models with Dynamic Parameters} \label{dynamic vitality}
When $Y(t)$ is non-decreasing, for example, the integral of correlated log-normally distributed random variables as in Section \ref{sec:vitality_dynamic_cohort}, then 
\begin{align*}
    \Pr(\tau > T) &= \Pr(V_y(0) > Y(T;y)) \\
    &= \int_0^{\infty} \Pr(V_y(0) > Y(T;y)|\gamma) \cdot f_y(\gamma) \diff \gamma \\
    &= \int_0^{\infty} \mathrm{Pr} \left(\mathrm{e}_\gamma  > Y(T;y)\right) \cdot f_y(\gamma) \diff \gamma \\
     &= \int_0^{\infty} \mathbb{E} \left[ \mathrm{e}^{-\gamma Y(T;y)} \right] \cdot f_y(\gamma) \diff \gamma \\
        &= \mathbb{E}\left[ \exp\left( - \Gamma(y) \cdot Y(T;y)\right) \right],
\end{align*}
where $\mathrm{e}_\gamma$ is an exponential random variable with parameter $\gamma$, and $f_y(\gamma)$ is the density function for $\Gamma(y)$ of cohort $y$.

\section{Monte Carlo Valuation} \label{app:MonteCarlo}
Here we employ the Monte Carlo valuation to the survival probability of the model in Section \ref{sec:vitality_jump}. Denote $H(t, v) = v - Y(t)$ for ease of notation, we can apply the following procedure to approximate ${}_tp_x(v)$:
\begin{enumerate}
	\item Select $n + 1$ time points $\{t_i\}_{i = 0, \cdots, n}$ such that $0 = t_0 < t_1 < ... < t_n = t$. 
	\item Given that $N(t) = k > 0$, generate the first jump tim $v_1$ with distribution function at time $r_1<t$ being $\frac{\int_{0}^{r_1}\lambda(s) \diff s }{ \int_0^t\lambda(s)\diff s  }$, then we can iteratively draw the inter-arrival time between the $(i-1)$-th and the $i$-th jumps with distribution function for $r_i$, $r_{i-1}<r_i<t$, being $\frac{ \int_{r_{i-1}}^{r_i} \lambda(s)\diff s }{ \int_{r_{i-1}}^{t} \lambda(s)\diff s }$. Let $s_1 = r_1$, $s_2 = r_1 + r_2$, $\cdots$, $s_k = \sum_{i=1}^k r_i$, which are the arrival time for each jump.
 
	\item By combining the two sets of partitions $\{\tilde{u}_i\}_{i=0,\cdots, n+k}  = \{t_i\}_{i=0,\cdots, n} \cup \{s_i\}_{i=1,\cdots, k}$ and re-ordering this set to have a new set denoted as $\{u_i\}_{i=0,\cdots, n+k}$, which is in ascending order such that $0 = u_0 < u_1 < ... < u_{n-1+k} < u_{n+k} = t$.
	
	\item Generate $\{Z_i\}_{i=1,\cdots, k}$ jump sizes from $k$ i.i.d. random variables with the cdf $F_Z(\cdot)$, denote the simulated samples as $z_1, z_2, \ldots z_k$.
	
	\item Compute the following quantities: 
	\begin{align*}
    	&b_0^+ := \frac{H(0, v)}{\sigma},\\
    	&b_1^+ := \frac{H(u_1, v)}{\sigma} - \sum_{j=1}^{k} \frac{z_j }{\sigma}\mathbb{1}_{\{s_j \le u_1\}}, \ \ldots, \ b_i^+ := \frac{H(u_i, v)}{\sigma} - \sum_{j=1}^{k} \frac{z_j }{\sigma} \mathbb{1}_{\{s_j \le u_i\}},\\
        &b_1^- := \frac{H(u_1,v)}{\sigma} - \sum_{j=1}^{k} \frac{z_j }{\sigma} \mathbb{1}_{\{s_j < u_1\}}, \ \ldots, \ b_i^- := \frac{H(u_i,v)}{\sigma} - \sum_{j=1}^{k} \frac{z_j }{\sigma} \mathbb{1}_{\{s_j < u_i\}},
	\end{align*}
	 for $i = 1, 2, ..., n+k$.
	\item Generate $(y_1, y_2, ... y_{n+k-1})$ from the $(n+k-1)$-dimensional multivariate normal distribution of $B(u_1), B(u_2), ..., B(u_{n+k-1})$ and let $y_0 = 0$.

\item Compute
\begin{align*}
&Q_1 = \prod_{i=1}^{n+k} \mathbb{1}\left(y_{i}<b_{i}^+\right)\left(1-\exp \left\{-\frac{2\left(b_{i-1}^{+}-y_{i-1}\right)\left(b_{i}^{-}-y_{i}\right)}{u_{i}-u_{i-1}}\right\}\right),
\end{align*}
where $\Phi$ is the cumulative distribution function of a standard normal distribution.

\item Repeat Steps 1 -- 7 for $m$ times, to obtain $Q_i$, $i=1, \cdots, m$ (from Step 7), and the survival probability (conditional on initial vitality of $v$) can be approximated by 
\begin{align*}
{}_tp_x(v) \approx \frac{1}{m} \sum_{i=1}^{m} Q_{i}.
\end{align*}

\end{enumerate}

{\color{black}
\section{Optimal Investment and Consumption}\label{app:optimal}
\subsection{Merton's Problem - Infinite Time}\label{app:optimal_merton_infinite}
The value function of Merton's problem under an infinite time horizon at time $t$ is
\begin{align*}
 J(a) =  \max_{\pi, \zeta \in \Pi} \mathbb{E}_{t,a} \left[\int_t^{\infty} e^{-\beta(s-t)}\cdot u(\zeta(s)) \diff s\right].
\end{align*}
The associated Hamilton–Jacobi–Bellman (HJB) equation can be expressed as
\begin{align*}
    \max_{\pi, \zeta} \left\{ J_a\cdot\left[a \cdot r + a\cdot \pi \cdot \theta\cdot \sigma_S - \zeta \right]  +  J_{aa} \cdot \frac{a^2\cdot \sigma_S^2 \cdot \pi^2}{2} + u(\zeta)\right\}  - \beta\cdot J = 0.
\end{align*}
The value function can be solved explicitly as
\begin{align*}
    J(a) = F\cdot \ln a + G, \quad \text{where } F = \frac{1}{\beta},  \text{ } G = \exp\left(\frac{[r + \theta^2/2- \beta ]/\beta + \ln \beta}{\beta}\right),
\end{align*}
and the optimal strategies are
\begin{align*}
    \pi^*(t) = \frac{\theta}{\sigma_S}, \quad \zeta^*(t) = \beta \cdot A(t).
\end{align*}

\subsection{Merton's Problem - Mortality}\label{app:optimal_merton_mortality}
Consider again Merton's problem with finite planning horizon $T$ and with mortality modeled by a deterministic force of mortality function $\mu_{x}$. The value function at time $t$ (for a person age $x$ at time $0$) is
\begin{align*}
    J(t,a) = \max_{\pi, \zeta\in \Pi} \mathbb{E}_{t,a} \left[\int_{t}^{T} e^{-\beta (s-t)}\cdot e^{-\int_{t}^s \mu_{x+y}\diff y} \cdot \left(u(\zeta(s))  + \mu_{x+s}\cdot \lambda_1 \cdot u(A(s)) \right)\diff s + \lambda_2 \cdot u(A(T)) \right],
\end{align*}
where $\lambda_1$ and $\lambda_2$ are the weighting parameters for the bequest and the end-of-period wealth, respectively. The associated HJB equation is
\begin{align*}
    J_t + \max_{\pi, \zeta} \left\{ J_a\cdot\left[a \cdot r + a\cdot \pi \cdot \theta\cdot \sigma_S - \zeta \right]  +  J_{aa} \cdot \frac{a^2\cdot \sigma_S^2 \cdot \pi^2}{2} + u(\zeta)  + \mu_{x+t}\cdot \lambda_1 \cdot u(a) \right\}  - (\beta + \mu_{x+t}) \cdot J = 0
\end{align*}
By conjecture that $J(t,a) = F(t) \cdot \ln a + G(t)$ with terminal condition
\begin{align*}
    &J(T, a) = \lambda_2 \cdot \ln a, \quad \implies F(T) = \lambda_2 \quad \text{and} \quad G(T) = 0,
\end{align*}
and by the first order condition, the optimal strategies are
\begin{align*}
    \pi^* &= \frac{-J_a\cdot \theta}{J_{aa}\cdot a \cdot \sigma_S} = \frac{\theta}{\sigma_S}, \quad \zeta^* = \frac{1}{J_a} = \frac{a}{F(t)}.
\end{align*}
Substitute into the HJB equation and simplify:
\begin{align*}
    &F'(t)\ln a + G'(t) + \frac{F(t)}{a} \cdot \left(a\cdot r + a\cdot \theta^2 - \frac{a}{F(t)}\right) - \frac{F(t)}{a^2} \cdot \frac{a^2 \theta^2}{2}\\
    & \qquad + \ln a - \ln F(t) + \mu_{x+t}\cdot \lambda_1\cdot u(a) - (\beta+ \mu_{x+t})(F(t)\ln a + G(t)) = 0.
\end{align*}
Grouping terms with $a$ and without $a$, we have
\begin{align*}
    &F'(t) + 1  + \mu_{x+t}\cdot \lambda_1 - (\beta+\mu_{x+t})\cdot F(t) = 0\\
    \implies & F(t) =  \lambda_2 \cdot e^{-\int_t^T \beta + \mu_{x+y}\diff y } + \int_t^T e^{-\int_t^s\beta + \mu_{x+y}\diff y }\cdot (1+ \mu_{x+s} \cdot \lambda_1 )\diff s   \\
    &G'(t) + F(t)\cdot r + F(t) \cdot \frac{\theta^2}{2} -1 - \ln F(t) - (\beta+\mu_{x+t})\cdot G(t) = 0\\
    \implies & G(t) = \int_t^T e^{-\int_t^s \beta + \mu_{x+y}\diff y}\cdot \left( F(s)\cdot \left(r + \frac{\theta^2}{2} \right) - 1 - \ln F(s)  \right)\diff s.
\end{align*}
In particular, when $T\rightarrow \infty$, $F(t)$ becomes
\begin{align*}
    F(t) &= \int_t^{\infty} e^{-\beta(t-s)}\cdot e^{-\int_t^s \mu_{x+y}\diff y} \diff s + \int_t^{\infty} e^{-\beta(t-s)}\cdot e^{-\int_t^s \mu_{x+y}\diff y} \cdot \mu_{x+s}\cdot \lambda_1 \diff s\\
    &= \bar{a}_{x+t} + \lambda_1\cdot \bar{A}_{x+t},
\end{align*}
where $\bar{a}_{x}$ and $\bar{A}_{x}$ are annuity price and insurance price issued to a person aged $x$ with \$1 paid continuously, evaluated using a discount rate of $\beta$.  

\subsection{Merton's Problem with Vitality}
\vspace{0.3cm}
\begin{definition}(Admissible Strategies) The pair of strategies $(\pi(\cdot), \zeta(\cdot))$ is called an admissible strategy, if it satisfies the following conditions:
\begin{enumerate}
    \item $\pi(\cdot)$ and $\zeta(\cdot)$ are progressively measurable;
    \item $\mathbb{E}_{0,a,v}\left[ \int_0^t \pi^2(s) + \zeta^2(s) \diff s \right]<\infty$ and $A(t)>0$ for any $t\in[0, \infty)$ and $v > 0$.
    \item associated with $(\pi(\cdot), \zeta(\cdot))$, the state equation $A(\cdot)$ has a unique strong solution.
\end{enumerate}
We denote the set of all admissible strategies by $\Pi$.\\
\end{definition}
The associated HJB equation for the value function $J(a,v)$ is
\begin{align*}
    \max_{\pi, \zeta} \left\{ J_a\cdot\left[a \cdot r + a\cdot \pi \cdot \theta\cdot \sigma_S - \zeta \right]  +  J_{aa} \cdot \frac{a^2\cdot \sigma_S^2 \cdot \pi^2}{2} - J_v\cdot \delta + J_{vv}\cdot \frac{\sigma_V^2}{2} + u(\zeta)\right\} - \beta \cdot J = 0
\end{align*}
with boundary conditions
\begin{align*}
    \lim_{v\rightarrow \infty} J(a,v) &= F\ln a + G\\
    \lim_{v\rightarrow 0} J(a,v) &= \lambda \ln a,
\end{align*}
where $F$ and $G$ are defined in the value function in Section \ref{app:optimal_merton_infinite}. The first line corresponds to the case where the individual will never die because the vitality approaches infinite, and thus the problem collapses to Merton's problem in Section \ref{app:optimal_merton_infinite}. The second line corresponds to when the individual is already dead since the vitality has been exhausted, and thus the individual's utility is reflected only through the bequest.\\

By the first-order condition, the optimal strategies can be expressed as:
 \begin{align*}
     \pi^* &=  \frac{ - J_a \cdot a \cdot \theta \cdot \sigma_S  }{J_{aa} \cdot a^2\cdot \sigma_S^2} \quad \text{and} \quad \zeta^* = \frac{1}{J_a}.
 \end{align*}
By conjecturing on the form of the value function $J(v,a) = f(v)\cdot \ln a + g(v)$, where $f(v)$ and $g(v)$ are functions of the vitality level, then 
\begin{align*}
    \pi^* &= \frac{ \theta  }{\sigma_S} \quad \text{and} \quad \zeta^* = \frac{a}{f(v)}.
\end{align*}
Substituting the expression back to the HJB equation and collecting the terms, we have
\begin{align*}
    &\delta \cdot f'(v) - f''(v) \cdot \frac{\sigma_V^2}{2} - 1 + \beta \cdot f(v) = 0\\
    &f(v)\cdot \left[r + \frac{\theta^2}{2}\right] - g'(v)\cdot \delta + g''(v)\cdot\frac{\sigma_V^2}{2} - \ln f(v) - \beta \cdot g(v) =0
\end{align*}
Therefore, we have
\begin{align*}
    f(v) = \left(\lambda - \frac{1}{\beta}\right) \cdot e^{k_1 v} + \frac{1}{\beta}, \quad  k_1 = \left\{\begin{array}{ll}
       \frac{ \delta - \sqrt{\delta^2 + 2 \sigma^2_V  \beta} }{ \sigma^2_V }<0,  & \sigma_V>0 \\
       \frac{-\beta}{\delta},  &  \sigma_V = 0.
    \end{array}\right.
\end{align*}
In particular, when $\sigma_V = 0$, the optimal consumption strategy becomes
\begin{align*}
    f(t) = \bar{a}_{\annuity{t^*}} + \lambda\cdot e^{-\beta t^*},
\end{align*}
where $t^* = \frac{v}{\delta}$ is the remaining life, and $\bar{a}_{\annuity{t^*}}$ is the annuity-certain evaluated with discount rate $\beta$. The expression of $f(t)$ has a similar interpretation with $F(t)$ in Section \ref{app:optimal_merton_mortality} but based on the remaining life $t^*$ ($e^{-\beta t^*}$ is the present value of \$1 pays at time $t^*$). 

}

\section{Disability with Recovery} \label{app:vitality_disability}
\begin{proof}
The $T$-year recovery probability for a disabled person aged $x$ can be expressed as

\begin{align}
     &\Pr\left( \underset{\text{healthy at $T$}}{\underbrace{\left\{V(T) > \psi\right\}}} \ \cap \underset{\text{remains alive until T} }{\underbrace{ \left\{ V(t)> 0, \ \forall t\in[0,T]\right\}}}\bigg| \underset{\text{disabled initially}}{\underbrace{V(0)<\psi }} \right) \nonumber \\ \nonumber \\
    =& \int_{0}^{\psi} \frac{\Pr\left( \left\{v - Y(T)-W(T)>\psi\right\} \text{ and } \left\{v-Y(t)-W(t) > 0   \text{ for any } t\in [0,T] \right\} \right)}{F_0(\psi) }\diff F_0(v). \label{eq:vitality_disability_recovery}
\end{align}
By assumption, the trend component is $Y(t) = \delta\cdot t$, and we may define 
    \begin{align*}
    \widehat{W}(t) &= \frac{1}{\sigma}\cdot \left(W(t) + Y(t) \right) = \frac{1}{\sigma}\cdot\left(W(t) + \delta \cdot  t\right), \\
    \widehat{M}(T) &= \frac{1}{\sigma}\cdot  \max_{0\leq t\leq T} \left\{ W(t) +  Y(t)\right\} = \max_{0\leq t\leq T} \left\{ \widehat{W}(t)\right\},
\end{align*}
where $\widehat{W}(t)$ is a Brownian motion with drift $\delta/\sigma$, and $\sigma \cdot \widehat{M}(t)$ tracks the maximum vitality depletion until $t$. If $\sigma \cdot \widehat{M}(t)$ is larger than the initial vitality level, then the individual is no longer alive at time $t$. Denote $f(m,w;T)$ as the joint density of $(\widehat{M}(T), \widehat{W}(T))$. Using results from \cite{shreve2004stochastic}, one can derive 
\begin{align*}
   f(m,w;T) = \frac{2(2m-w)}{T\sqrt{2\pi T}} e^{\frac{\delta w}{\sigma} - \frac{\delta^2T}{2\sigma^2} - \frac{(2m-w)^2}{2T}}, w\leq m, m\geq 0,
\end{align*}
and is zero otherwise. 
Notice that the numerator of the expression inside equation \eqref{eq:vitality_disability_recovery} can be written as a function of $(\widehat{M}(T), \widehat{W}(T))$:
\begin{align*}
    &\Pr\left( \left\{v - Y(T)-W(T)>\psi \right\}\text{ and } \left\{v-Y(t)-W(t) > 0   \text{ for any } t\in [0,T] \right\} \right)\\
    =& \Pr \left( v - \sigma \cdot \widehat{W}(T)>\psi,  v - \sigma \cdot \widehat{M}(t) > 0 \right)\\
    =& \Pr \left(    \widehat{W}(T) < \frac{v-\psi}{\sigma}, \ \widehat{M}(T)<\frac{v}{\sigma}\right).
\end{align*}
Therefore, substitute the density function $f(m,v;T)$, we have the recovery probability as:
\begin{align*}
\int_0^{\psi}\frac{1}{F_0(\psi)} \cdot \left(\int_0^{\frac{v}{\sigma}} \int_0^{\frac{v-\psi}{\sigma} } f(m,w; T) \diff w \diff m \right)\diff F_0(v).
\end{align*}
A similarly procedure follows for the disability probability and therefore we omit the details.

\end{proof}

\section{Approximation Methods} \label{appendix: approximation}
\subsection{Tangent Approximation}
If we set $J(t) = 0$, and given the initial vitality level $V(0) = v$,  then \cite{jennen1981first} discussed a tangent approximation method to get the density of $\tau(v)$; that is,
\begin{align*}
\Pr (\tau(v) \in \diff t) = \frac{|H(t,v)-t \cdot H_t(t,v)|}{\sigma \sqrt{2 \pi t^{3}}} e^{-\frac{\left(H(t,v)\right)^{2}}{2 \sigma^{2} t}},
\end{align*}
where $H(t,v):= v - Y(t)$.  

\subsection{Series Expansion}
The tangent approximation has the advantage of simplicity but may sacrifice the accuracy of the approximation. An extension that includes a series expansion has been proposed by \cite{durbin1992first}. The density of $\tau(v)$ is approximated as
\begin{align*}
 \Pr(\tau(v) \in \diff t) \approx \sum_{j=1}^{k}(-1)^{j-1} q_{j}(t), \quad k=1,2, \cdots,
\end{align*}
where 
\begin{align*}
q_{j}(t)= & \int_{0}^{t} \int_{0}^{t_{1}} \cdots \int_{0}^{t_{j-2}}\left[\frac{H\left(t_{j-1},v\right)}{\sigma t_{j-1}}- \sigma^{-1} H_t\left(t_{j-1},v\right)\right] \\ & \times \prod_{i=1}^{j-1}\left[\frac{H\left(t_{i-1},v\right)-H\left(t_{i},v\right)}{\sigma(t_{i-1}-t_{i})}-\sigma^{-1} H_t\left(t_{i-1},v\right)\right] f\left(t_{j-1}, \cdots, t_{1}, t\right) \diff t_{j-1} \cdots \diff t_{1},
\end{align*}
with $f\left(t_{j-1}, \cdots, t_{1}, t\right)$ being the joint density of $(B\left(t_{j-1}\right), \cdots, B\left(t_{1}\right), B(t))$ evaluated at the values of $(H(t_{j-1}, v)/\sigma, \cdots, H(t_{1}, v)/\sigma, H(t,v)/\sigma )$. 
One notable advantage of this approximation method is its rapid convergence when $H$ is concave in $t$. \cite{durbin1992first} demonstrated the accuracy of this series expansion using specific concave functions for $H$, and found that with a choice of $k=3$ (which leads to a double integration), the error lies within $0.001\%$ of the true value.

\section{Cause-of-Death Analysis}

\begin{proof}[Proof of Theorem \ref{decompose tau}]
	We begin by proving equation~\eqref{cpp: LT jump}. Assume that the death occurs at the $n$th ($n\ge 1$) jump in the vitality level (at time $T_n$). Using the independence assumption between $N(t)$ and $\left\{Z_i\right\}_{i\ge 1}$, and conditional on the initial vitality $v$, we have
    \begin{align}\label{cpp: eq1}
 \notag  \mathbb{E}_v\left[\mathrm{e}^{-q\tau_{J}}\mathbb{1}\left(\tau_{J}<\tau_{Y,}\tau_{J}=T_{n}\right)\right]=&\int_{-\infty}^{0}\int_{0}^{\infty}\mathbb{E}_v\left[\mathrm{e}^{-qT_{n}}\mathbb{1}\left(V\left(T_{n}^{-}\right)\in\mathrm{d}z,V\left(T_{n}\right)\in\mathrm{d}y\right) \right]\\
   =&\int_{0}^{\infty}\bar{F}_Z(z) \mathbb{E}_v\left[\mathrm{e}^{-qT_{n}}\mathbb{1}\left(V\left(T_{n}^{-}\right)\in\mathrm{d}z\right) \right].
    \end{align}
Further conditioning on $T_n$, we have
\begin{align*}
\mathbb{E}_{v}\left[\mathrm{e}^{-qT_{n}}\mathbb{1}\left(V\left(T_{n}^{-}\right)\in\mathrm{d}z\right)\right]&=\int_{0}^{\infty}\mathbb{1}(v-Y(t)>z)\mathbb{E}_{v}\left[\mathrm{e}^{-qt}\mathbb{1}\left(V\left(t\right)\in\mathrm{d}z\right)\right]\frac{\lambda^{n}t^{n-1}\mathrm{e}^{-\lambda t}}{(n-1)!}\mathrm{d}t\\
&=\int_{0}^{\infty}\mathbb{1}(v-Y(t)>z)F^{*(n-1)}\left(v-Y(t)-\mathrm{d}z\right)\frac{\lambda^{n}t^{n-1}\mathrm{e}^{-\left(q+\lambda\right)t}}{(n-1)!}\mathrm{d}t.
\end{align*}
Substituting the above equation into equation~\eqref{cpp: eq1} yields
\begin{align}\label{cpp: eq1-1}
\notag &	 \mathbb{E}_v\left[\mathrm{e}^{-q\tau_{J}}\mathbb{1}\left(\tau_{J}<\tau_{Y,}\tau_{J}=T_{n}\right)\right]\\
	\notag =&\int_{0}^{\infty}\mathbb{1}\left(v-Y(t)>0\right)\left(\int_{0}^{v-Y(t)}f_{Z}^{*(n-1)}(v-Y(t)-z)\bar{F}_{Z}(z)\mathrm{d}z\right)\frac{\lambda^{n}t^{n-1} \mathrm{e}^{-(q+\lambda) t}}{(n-1)!}\mathrm{d}t\\
\notag	=&\int_{0}^{\infty}\mathbb{1}\left(v-Y(t)>0\right)\left(F_{Z}^{*(n-1)}\left(v-Y(t)\right)-F_{Z}^{*(n)}\left(v-Y(t)\right)\right)\frac{\lambda^{n}t^{n-1} \mathrm{e}^{-(q+\lambda) t}}{(n-1)!}\mathrm{d}t\\
	\notag       =&\int_{0}^{\infty}\mathbb{1}(v-\frac{bc^{x}}{\ln c}\left(c^{t}-1\right)>0)\frac{\mathrm{e}^{-\alpha\left(v-\frac{bc^{x}}{\ln c}\left(c^{t}-1\right)\right)}\alpha^{n-1}\left(v-\frac{bc^{x}}{\ln c}\left(c^{t}-1\right)\right)^{n-1}}{(n-1)!}\frac{\lambda^{n}t^{n-1} \mathrm{e}^{-(q+\lambda)t}}{(n-1)!}\mathrm{d}t\\
	=& \int_{0}^{t_v^{*}}\frac{\lambda \mathrm{e}^{-\alpha\left(v-\frac{bc^{x}}{\ln c}\left(c^{t}-1\right)\right)-(q+\lambda)t}\left(\lambda\alpha vt-\frac{\lambda\alpha bc^{x}t}{\ln c}\left(c^{t}-1\right)\right)^{n-1}}{(n-1)!(n-1)!}\mathrm{d}t,
\end{align}
 where the third equation follows from the fact that the $n$-fold convolution of $F_Z(z)$ with itself is given by
\begin{align*}
	F_{Z}^{*(n)}(z)&=1-\sum_{i=0}^{n-1}\frac{1}{i!}\mathrm{e}^{-\alpha z}(\alpha z)^{i}.
\end{align*}  
Summing up equation~\eqref{cpp: eq1-1} over $n$ leads to completes the proof of equation~\eqref{cpp: LT jump}.

Now, suppose that the death occurs due to the decrement in $Y(\cdot)$ rather than a fatal accident, i.e., $\tau = \tau_Y < \tau_J$. Conditioning on $\{T_n=t, V(T_n) = z>0\}$, the event that there are $n$ ($n \ge 0$) accidents occurred before death is equivalent to the event $\{\tau_Y=t+s, T_{n+1} - T_n > s\}$, where $s$ is the solution to $Y(t+s)-Y(t)=z$ (i.e., the vitality level $V$ decreases to $0$ before the $(n+1)$th accident occurs). Solving the equation in $s$ leads to
\begin{align*}
s = \frac{\ln\left(bc^{x+t}+z\ln c\right)-\ln\left(bc^{x+t}\right)}{\ln c}.
\end{align*}
Along with the fact that $T_{n+1} - T_n$ is exponentially distributed and independent of $(T_n, V(T_n))$, we have
\begin{align}
 \notag   &\mathbb{E}_v\left[\mathrm{e}^{-q\tau_{Y}}\mathbb{1}\left(\tau_{Y}<\tau_{J,}N(\tau_{Y})=n\right)\right]\\
\notag =&\int_{0}^{\infty}\mathbb{1}\left(v-Y(t)>0\right)\int_{0}^{v-Y(t)}\mathrm{e}^{-q(t+s)}\mathrm{Pr}_v\left(V\left(T_{n}\right)\in\mathrm{d}z,T_{n}\in\mathrm{d}t,T_{n+1}-T_{n}>s\right)\\
\notag =&\int_{0}^{\infty}\mathbb{1}\left(v-Y(t)>0\right)\int_{0}^{v-Y(t)}\mathrm{e}^{-(q+\lambda)s}f_{Z}^{*(n)}(v-Y(t)-z)\frac{\lambda^{n}t^{n-1} \mathrm{e}^{-(q+\lambda) t}}{(n-1)!}\mathrm{d}z\mathrm{d}t.
\end{align}
By substituting the representation of $s$ and the form of $Y$ into the above equation and performing some direct calculations, we obtain
\begin{align} \label{cpp: eq4}
\notag    &\mathbb{E}_v\left[\mathrm{e}^{-q\tau_{Y}}\mathbb{1}\left(\tau_{Y}<\tau_{J,}N(\tau_{Y})=n\right)\right]\\
\notag =&\int_{0}^{t_v^{*}}\int_{0}^{v-\frac{bc^{x+t}-bc^{x}}{\ln c}}\left(\frac{\left(v-z\right)\ln c+bc^{x}}{bc^{x+t}}\right)^{-\frac{q+\lambda}{\ln c}}\frac{\alpha^{n}z^{n-1} \mathrm{e}^{-\alpha z}}{(n-1)!}\mathrm{d}z\frac{\lambda^{n}t^{n-1} \mathrm{e}^{-(q+\lambda)t}}{(n-1)!}\mathrm{d}t\\
 \notag=&\int_{0}^{t_v^{*}}\int_{0}^{v-\frac{bc^{x+t}-bc^{x}}{\ln c}}\left(\frac{\left(v-z\right)\ln c+bc^{x}}{bc^{x}}\right)^{-\frac{q+\lambda}{\ln c}}\frac{\alpha^{n}z^{n-1} \mathrm{e}^{-\alpha z}}{(n-1)!}\mathrm{d}z\frac{\lambda^{n}t^{n-1}}{(n-1)!}\mathrm{d}t\\
\notag =&\int_{0}^{v}\left(\frac{\ln\frac{\left(v-z\right)\ln c+bc^{x}}{bc^{x}}}{\ln c}\right)^{n}\frac{\lambda^{n}}{n!}\left(\frac{(v-z)\ln c+bc^{x}}{bc^{x}}\right)^{-\frac{q+\lambda}{\ln c}}\frac{\alpha^{n}z^{n-1}\mathrm{e}^{-\alpha z}}{(n-1)!}\mathrm{d}z\\
 =&\int_{0}^{t_v^{*}}\mathrm{e}^{-\left(q+\lambda\right)t-\alpha b_v^*}\frac{\left(b_v^*\right)^{n-1}\left(\alpha\lambda t\right)^{n}}{n!(n-1)!}bc^{x+t}\mathrm{d}t,
\end{align}
for $n\ge 1$, where the third equation follows from interchanging the order of integrations, while the last equation is derived by changing of variable (letting $\frac{\ln\left(bc^{x}+z\ln c\right)-\ln b}{\ln c}-x=t$).~Summing up equation~\eqref{cpp: eq4} over $n$ ($n\ge 1$) and noting that 
\begin{align*}
\mathbb{E}_v\left[\mathrm{e}^{-q\tau_{Y}}\mathbb{1}\left(\tau_{Y}<\tau_{J,}N(\tau_{Y})= 0\right)\right] = \mathrm{e}^{-(q+\lambda)t_v^{*}}
\end{align*}
completes that proof of equation~\eqref{cpp: LT creep}.

\end{proof}

\end{document}